\date{}
\newlang{\Reach}{Reach}
\newcommand{\sep}{\textsf{sep}}
\newcommand{\AND}{\textbf{and}}
\newcommand{\recdec}[2]{\textsf{rdtree}(#1, #2)}
\newcommand{\indgraph}[2]{G_{\langle #1, #2\rangle}}
\newcommand{\rt}[1]{\textsf{root}(#1)}
\newcommand{\parent}[1]{\textsf{parent}(#1)}
\newcommand{\leftchild}[1]{\textsf{left}(#1)}
\newcommand{\rightchild}[1]{\textsf{right}(#1)}
\newcommand{\vanc}[1]{V^{\textsf{anc}}_{#1}}
\newcommand{\eanc}[1]{E^{\textsf{anc}}_{#1}}
\newcommand{\ganc}[1]{G^{\textsf{anc}}_{#1}}
\newcommand{\vad}[1]{V_{#1}}
\newcommand{\ead}[1]{E_{#1}}
\newcommand{\gad}[1]{G_{#1}}
\newcommand{\pos}[2]{\textsf{index}_{#1}(#2)}
\newcommand{\len}[2]{L(#1 , #2)}
\newcommand{\lseq}[2]{\textsf{SEQ}_{#1, #2}}
\newcommand{\Tdepth}[0]{h}
\newtheorem{theorem}{Theorem}
\newtheorem{lemma}{Lemma}
\newtheorem{definition}{Definition}
\title{Reachability in High Treewidth Graphs}
\author{
	Rahul Jain\\
	Department of Computer Science and Engineering\\
	Indian Institute of Techonology Kanpur\\
	\texttt{jain@cse.iitk.ac.in} \\
	\and
	Raghunath Tewari \\
	Department of Computer Science and Engineering\\
	Indian Institute of Techonology Kanpur\\
	\texttt{rtewari@cse.iitk.ac.in} \\
}
\begin{document}
	\maketitle
	
	\begin{abstract}
		Reachability is the problem of deciding whether there is a path from one vertex to the other in the graph. Standard graph traversal algorithms such as DFS and BFS take linear time to decide reachability however their space complexity is also linear. On the other hand, Savitch's algorithm takes quasipolynomial time although the space bound is $O(\log^2 n)$. Here, we study space efficient algorithms for deciding reachability that runs simultaneously in polynomial time. 
	
		In this paper, we show that given an $n$ vertex directed graph of treewidth $w$ along with its tree decomposition, there exists an algorithm running in polynomial time and $O(w\log n)$ space, that solves reachability in the graph.
	\end{abstract}

	\section{Introduction}
	Given a graph $G$ and two vertices $u$ and $v$ in $G$, the reachability problem is to decide if there exists a path from $u$ to $v$ in $G$. This problem is $\NL$-complete for directed graphs and $\L$-complete for undirected graphs  \cite{Reingold}. Hence its study gives important insight into space bounded computations. We will henceforth refer to the problem of directed graph reachability as {\Reach}. The famous open question $\L \stackrel{?}{=} \NL$ essentially asks if there is a deterministic logspace algorithm for {\Reach} or not. {\Reach} can be solved in $\Theta(n\log n)$ space and optimal time using standard graph traversal algorithms such as DFS and BFS. We also know, due to Savitch, that it can be solved in $\Theta(\log^2 n)$ space \cite{Savitch}. However, Savitch's algorithm requires $n^{\Theta(\log n)}$ time. Wigderson surveyed reachability problems in which he asked if there is an algorithm for {\Reach} that runs simultaneously in $O(n^{1 -\epsilon})$ space (for any $\epsilon >0$) and polynomial time \cite{Wigderson}. Here, we make some partial progress towards answering this question.
	
	In 1998 Barnes et al. made progress in answering Wigderson's question for general graph by presenting an algorithm for {\Reach} that runs simultaneously in $n/2^{\Theta(\sqrt{\log n})}$ space and polynomial time \cite{BBRS}. Using this result, Asano et al. improved the space bound of DFS and showed that it could be performed using $O(n)$ bits of space and $O(m\log n)$ time \cite{AsanoTaisuke}. For several other topologically restricted classes of graphs, there has been significant progress in giving polynomial time algorithms for {\Reach} that run simultaneously in sublinear space. For grid graphs a space bound of $O(n^{1/2 + \epsilon})$ was first achieved \cite{Asano11}. The same space bound was then extended to all planar graphs by Imai et al. \cite{imai}. Later for planar graphs, the space bound was improved to $\tilde{O}(n^{1/2})$ space by Asano et al. \cite{Asano}. For graphs of higher genus, Chakraborty et al. gave an $\tilde{O}(n^{2/3}g^{1/3})$ space algorithm which additionally requires, as an input, an embedding of the graph on a surface of genus $g$ \cite{ChakrabortyPavan}. They also gave an $\tilde{O}(n^{2/3})$ space algorithm for $H$ minor-free graphs which requires tree decomposition of the graph as an input and $O(n^{1/2 + \epsilon})$ space algorithm for $K_{3,3}$-free and $K_5$-free graphs. For layered planar graphs, Chakraborty and Tewari showed that for every $\epsilon >0$ there is an $O(n^{\epsilon})$ space algorithm \cite{Chakraborty}. 
	
	Treewidth is a well-studied property of graphs. The value of treewidth can range from $1$ (for a tree) to $n-1$ (for a complete graph on $n$ vertices). The computational complexity of many difficult problems become easy for bounded treewidth graphs. The weighted independent set problem can be solved in $O(2^wn)$ time \cite{Bodlaender}. Similarly, we can solve other classic problems such as the Hamiltonian circuit, vertex cover, Steiner tree, and vertex coloring in linear time for bounded treewidth \cite{Arnborg2}. It is $\NP$-complete to find on given input $\langle G, k \rangle$, if $G$ has treewidth $k$ \cite{Arnborg}. However, an $O(\sqrt{\log n})$-factor approximation algorithm is known \cite{Feige}. Series-parallel graphs are equivalent to graphs of treewidth $2$. For them, Jackoby and Tantau showed a logspace algorithm for {\Reach}. Das et al. extended the logspace bound to bounded treewidth graphs when the input contains the tree decomposition \cite{Das}. Elberfeld et al. showed a logspace algorithm for any monadic second order property of a logical structure of bounded treewidth \cite{Elberfeld}.
	\subsection*{Our Result}
	In this paper, we present a polynomial time algorithm with improved space bound for deciding reachability in graphs of treewidth $w$. In particular, we show the following result.
	
	\begin{theorem}
		\label{thm:main}
		Given a graph $G$, a tree decomposition $T$ of $G$ of treewidth $w$, and two vertices $u$ and $v$ in $G$, there is an $O(w\log n)$ space and polynomial time algorithm that decides if there is a path from $u$ to $v$ in $G$. 
	\end{theorem}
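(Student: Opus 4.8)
The plan is to turn the tree decomposition into a balanced recursive structure and then decide reachability by a separator-based recursion whose cost is charged entirely to the depth of that structure. First I would transform the given decomposition $T$ into a balanced binary recursive decomposition tree $\recdec{T}{G}$ of height $\Tdepth = O(\log n)$ in which every node still carries a bag of $O(w)$ vertices. The standard balancing transformations for tree decompositions increase the width only by a constant factor, and the resulting tree, together with the bag of each node and the ancestor sets $\vanc{x}$, can be navigated within $O(w\log n)$ space using indices into $T$ rather than an explicit copy. For each vertex I would fix its \emph{home} node (say the shallowest bag containing it), and for a pair $u,v$ locate the node where their root-to-home branches diverge via the common prefix $\common{u}{v}$.

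The structural fact I would rely on is that the bag at a node $x$ separates the left subtree of $x$ from its right subtree: any path in $G$ between a left-home vertex and a right-home vertex must meet the $O(w)$ separator vertices $\sep_x$. Consequently, for a query $\reach{u}{v}$ whose homes diverge at $x$, every $u$--$v$ path crosses $\sep_x$, and restricting to simple paths it does so at most $|\sep_x| = O(w)$ times, staying on a single side between consecutive crossings. This yields the recursion: $\reach{u}{v}$ holds iff $v$ is reachable from $u$ in the reduced graph on $\{u,v\}\cup\sep_x$ whose edges record reachability within a single side, and each such edge is itself a reachability query on the strictly smaller graph $\ganc{\leftchild{x}}$ or $\ganc{\rightchild{x}}$, answered recursively. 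Implementing the reduced-graph search as a traversal that generates its edges on demand, each level of the recursion needs to remember only the $O(w)$ separator vertices it is currently routing through, so a recursion stack of depth $\Tdepth = O(\log n)$ occupies $O(w\log n)$ space, as required.

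The main obstacle is attaining \emph{polynomial} time together with this space bound. If the reduced-graph edges are regenerated by independent recursive calls, the running time obeys a recurrence of the form $T(x) \le \mathrm{poly}(w)\cdot T(\leftchild{x}) + \dots$, which telescopes to $w^{\Theta(\log n)}$ --- precisely the quasipolynomial blow-up of Savitch's algorithm. Avoiding this while refusing to store the full $O(w^2)$ reachability relation at a node (which would already cost $O(w^2\log n)$ space) is the crux. My plan is to linearise the branching recursion: fix a canonical ordering of $\sep_x$ and compute the set of separator vertices reachable from $u$ by iterating a frontier of $O(w)$ vertices left-to-right over this ordered sequence, so that the relation at $x$ is assembled by a single sweep that reuses partial results rather than by $\Theta(w^2)$ fresh recursive calls; this is the role I expect the sequence and indexing machinery ($\lseq{u}{v}$, $\pos{\sigma}{s}$, $\len{u}{v}$) to play.

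The step I expect to be genuinely hard, and where the real work lies, is the amortised accounting behind this linearisation: proving that across the whole recursion the bounded number of separator crossings --- at most $O(w)$ per level, and hence $O(w\log n)$ along any single simple path --- limits the total number of primitive reachability operations to a polynomial. If that bound holds, the sweep-based reuse buys polynomial time without the algorithm ever storing more than $O(w)$ vertices per level, giving the claimed simultaneous $O(w\log n)$ space and polynomial time; making this charging argument airtight, rather than the separator decomposition itself, is the part I anticipate to be delicate.
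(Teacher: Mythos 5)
Your first phase (computing a balanced, $O(\log n)$-depth, $O(w)$-width binary decomposition and navigating it implicitly in $O(w\log n)$ space) is essentially the paper's Section~\ref{sec:decomp}, although ``standard balancing transformations'' understates what is needed: the paper has to build a recursive decomposition from separators extracted out of the bags of $T$ (via Lemma~\ref{lem:sep} and Reingold's algorithm) and to navigate it by recomputation rather than storage, since classical balancing procedures are not obviously implementable in this space bound. That part of your sketch is under-specified but follows the same idea and is repairable.

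The genuine gap is in the second phase, exactly where you yourself say the real work lies: you do not have the mechanism that makes the algorithm run in polynomial time, and the route you gesture at does not work as stated. First, your accounting claim is wrong: crossings do not add up to $O(w\log n)$ along a simple path, they \emph{multiply} across levels --- a path crosses the root separator up to $w$ times, each resulting segment crosses a child separator up to $w$ times, and so on, giving $(w+1)^{\Theta(\log n)}$ leaf-level segments by naive counting. (A polynomial bound on total segments can in fact be salvaged because separators at distinct levels of the recursive decomposition are vertex-disjoint, but you neither state nor use this.) Second, and more fundamentally, a bound on the number of crossings of an \emph{unknown} path does not by itself yield an algorithm: within $O(w\log n)$ space you cannot memoize the per-side reachability relations, so the ``primitive operations'' of your sweep are themselves recursive calls whose cost multiplies level by level regardless of how few crossings any particular path makes. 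The paper's resolution is not an amortized charging argument but a different device: \emph{universal sequences} (Lemma~\ref{lem:us}, from Asano et al.). It constructs one fixed, explicitly computable schedule of leaves $\lseq{\rt{T}}{n}$ whose length $2^{\Tdepth}n\binom{\Tdepth+\log n}{\log n}$ is polynomial once $\Tdepth=O(\log n)$, and runs a single BFS-style pass over that schedule using just two $O(w\Tdepth)$-bit vectors (Algorithm~\ref{algo:reach}). Correctness (Lemma~\ref{lem:iter}) decomposes a hypothetical path by \emph{length budgets}, which sum to at most $d$ and can therefore be matched against a subsequence of $\sigma_{\log d}$; lengths add across levels, which is what keeps the schedule polynomial, whereas crossing counts do not. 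This ``one schedule good for all paths'' idea is the missing ingredient; without it, or an equivalent, your proposal stalls precisely at the step you flagged as delicate.
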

	
	To prove this, we first give an algorithm to compute a more balanced tree decomposition from the given tree decomposition that has similar treewidth. Our idea to compute this new tree decomposition is based on the idea of computing a balanced binary tree decomposition from Elberfeld et al. \cite{Elberfeld}. However, in their paper, they construct a graph called descriptor decomposition of size $\Omega(n^w)$ as an intermediate. They then use this graph to construct a tree decomposition. They bound the depth of tree decomposition by $O(w\log n)$. As a result, their algorithm requires $\Omega(n^w)$ time and the bound on depth is dependent on $w$. We, on the other hand, construct a tree decomposition using only $\poly(n)$ time and of depth $O(\log n)$. It is important to note here that the depth of our tree decomposition is independent of $w$. Next, we give an algorithm to decide reachability in $G$, which uses the newly constructed balanced binary tree decomposition. We use the idea of {\em universal sequences} from the work of Asano et al. \cite{Asano}, to bound the time complexity of our algorithm by a polynomial.  
	
	For graphs of treewidth $n^{1-\epsilon}$, for any $\epsilon > 0$, our algorithm gives an $O(n^{1-\delta})$ space and polynomial time algorithm (for some $\delta$), thus giving an affirmative answer to Wigderson's question for such class of graphs. For graphs of polylog treewidth, we show that reachability is in polynomial time and polylog space. Graphs which have genus $g$ have treewidth $O((gn)^{1/2})$, hence our algorithm gives a $O((gn)^{1/2}\log n)$ space and polynomial time algorithm for it. For planar graphs, our approach gives $O(n^{1/2} \log n)$ space and polynomial time algorithm. 
	
	Let $H$ be a graph on $h$ vertices. An $H$ minor-free graph is also, by definition, $K_h$ minor free where $K_h$ is a complete graph on $h$ vertices. Graphs which exclude a fixed minor $K_h$, have a treewidth of $h{n}^{1/2}$ \cite{nonplanar}\cite{kawarareed}. Hence, for constant $h$, our approach results in $O(n^{1/2} \log n)$ space and polynomial time algorithm for $H$ minor-free graphs.
	
	In Section \ref{sec:prelim} we give the definitions, notations and previously known results that we use in this paper. In Section \ref{sec:decomp} we show how to efficiently compute a logarithmic depth, binary tree decomposition of $G$ having a similar width from the input tree decomposition. In Section \ref{sec:reach} we give the reachability algorithm and prove its correctness and complexity bounds.
	
	\section{Preliminaries}
	\label{sec:prelim}
	For a graph $G$ on $n$ vertices, we denote its vertex and edge sets as $V(G)$ and $E(G)$ respectively.
	Let $W$ be a subset of $V(G)$. We denote the subgraph of $G$ induced by the vertices in $W$ by $G[W]$. Let $[n]$ denote the set $\{1, 2, \ldots , n\}$ for $n\geq 1$ .
	
	We next define the necessary terminology and notations related to tree decomposition that we use in this paper. For tree decomposition, we will treat the graph as an undirected graph by ignoring the direction of its edges. 
	
	For a graph $G$, a {\em tree decomposition} is a labeled tree $T$ where the labeling function $B: V(T) \rightarrow \{X \mid X \subseteq V(G) \}$ has the following property: (i) $\bigcup_{t \in V(T)} B(t) = V(G)$, (ii) for every edge $\{v, w\}$ in $E(G)$, there exists $t$ in $V(T)$ such that $v$ and $w$ are in $B(t)$, and (iii) if $t_3$ is on the path from $t_1$ to $t_2$ in $T$, then $B(t_1) \cap B(t_2) \subseteq B(t_3)$. The {\em treewidth} of a tree decomposition $T$ is $\max_{t \in V(T)} (B(t) - 1)$. Finally the treewidth of a graph $G$ is the minimum treewidth over all tree decompositions of $G$. We refer to a node $t$ in $V(T)$ as a {\em treenode} and the set $B(t)$ to be the {\em bag} corresponding to $t$.
	
	The next tool that we would be using is that of separators in graphs. For a subset $W$ of $V(G)$, a {\em vertex separator} of $W$ in $G$, denoted as $\sep(W)$, is a subset $S$ of $V(G)$ such that every component of the graph $G[V(G) \setminus S]$ has at most $\lvert W \rvert/2$ vertices of $W$.
	
	We state here the following commonly known result about vertex separators in the form that we would be using it. We also give a proof of the result for the sake of completeness.
	
	\begin{lemma} 
		\label{lem:sep}
		Let $G$ be a graph and $T$ be a tree decomposition of $G$. For every subset $U$ of $V(G)$, there exists a vertex $t$ in $V(T)$ such that the bag $B(t)$ is the vertex separator of $U$ in $G$. 
	\end{lemma}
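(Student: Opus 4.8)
The plan is to use a standard "centroid-finding" argument on the tree $T$, adapted so that the role of "weight" is played by the number of vertices of $U$ that appear in each subtree. The key idea is that any bag $B(t)$ in a tree decomposition acts as a separator of $G$: by property (iii) of tree decompositions, removing $B(t)$ disconnects $G$ into pieces that correspond to the connected components of $T - t$. So I want to find a single treenode $t$ such that each subtree hanging off $t$ carries at most $|U|/2$ vertices of $U$, and then argue that $B(t)$ is the desired separator $\sep(U)$.

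First I would set up the weighting. For each treenode $t$, assign to $t$ the number of vertices of $U$ that are "first introduced" at $t$, being careful to count each vertex of $U$ exactly once (a vertex of $G$ may appear in many bags, but by property (iii) the set of treenodes whose bags contain it forms a connected subtree, so I can charge it to a canonical representative such as the node closest to the root). This gives a weight function on $V(T)$ with total weight $|U|$. Then I would invoke the classical tree-separator fact: any tree with a nonnegative weight function of total weight $W$ has a node $t$ such that every connected component of $T - t$ has weight at most $W/2$. The usual proof is to start at an arbitrary node and repeatedly walk toward the unique neighbor whose subtree has weight exceeding $W/2$; this process is monotone and must terminate at the desired centroid $t$.

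Next I would verify that this centroid $t$ actually yields $B(t)$ as a separator of $U$ in the graph sense. Here I would use property (iii) to show that if two vertices $x, y \in V(G) \setminus B(t)$ lie in bags belonging to two different components of $T - t$, then any edge of $G$ joining them is impossible, since the treenodes containing $x$ (respectively $y$) all lie in a single component of $T - t$ and cannot "jump across" $t$ without passing through a bag containing $B(t)$'s vertices. Consequently each connected component of $G[V(G) \setminus B(t)]$ is confined to the bags of a single component of $T - t$, and therefore contains at most $|U|/2$ vertices of $U$ by the choice of $t$. This is exactly the defining condition of $\sep(U)$.

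The main obstacle I anticipate is the bookkeeping that makes the weight function well-defined and consistent with the component structure of $T - t$: I must ensure that every vertex of $U$ is counted exactly once and that its "home" treenode sits inside (or is identified with) the correct component so that the $|U|/2$ bound transfers cleanly from tree-weight to graph-component. The connectivity guaranteed by property (iii) is precisely what rescues this — it guarantees that the set of bags containing a given vertex is a subtree, so assigning the vertex to a single representative node never splits its "mass" across two components of $T - t$. Once that subtree property is used correctly, both the existence of the centroid and the translation into a graph separator follow routinely.
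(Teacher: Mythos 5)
Your proposal is correct and follows essentially the same route as the paper's proof: charge each vertex of $U$ to a single canonical treenode (the paper uses $\alpha(t) = |C(t)\setminus C(\parent{t})|$ after rooting $T$, which is exactly your ``first introduced'' charging), apply the classical weighted-tree centroid fact to get $t^*$, and then use properties (ii) and (iii) of tree decompositions to confine each connected component of $G[V(G)\setminus B(t^*)]$ to the bags of a single component of $T - t^*$. The only cosmetic difference is that you also sketch a proof of the centroid fact, which the paper simply cites as known.
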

	\begin{proof} We root the tree arbitrarily. For a treenode $t$ in $V(T)$, we denote its parent by $\parent{t}$. Let $C(t) = B(t) \cap U$. We define weights on the vertices of $T$, such that each vertex of $U$ is counted in only one of the weights. $\alpha(t) = \lvert C(t) \setminus C(\parent{t})\rvert$. Thus, $\sum_{t \in V(T)}\alpha(t) = \lvert U \rvert$. In a weighted tree, there exists a vertex whose removal divides the tree into components whose weights are at most half the total weight of the tree. Let this vertex be $t^*$ for the weight function $\alpha$. We claim that $B(t^*)$ is the vertex separator for $U$ in $G$. To prove this, we will prove that for a connected component $H$ of $G[V(G) \setminus B(t^*)]$, $H$ is a subset of $(\cup_{t \in T_i}B(t))$ for some subtree $T_i$ of $T \setminus \{t^*\}$. Since the total weight of this subtree is at most half the total weight on $T$, it would follow that the number of vertices of $U$ contained in this set would be at most half, thus proving the lemma.
		
		We will now prove that $H \subseteq (\cup_{t \in T_i}B(t))$. We first observe that a vertex $v \in V(G) \setminus B(t^*)$ can be in the bag of only one of the subtree, since otherwise, it would belong to $B(t^*)$ as well, due to the third property of tree decomposition. Now, let us assume that there are two vertices of $H$ which belong to the bags of two different subtrees, say $T_i$ and $T_j$. Since they are in connected component, there will exist a path between them. In this path, there will exist an edge, whose endpoints $v_1$ and $v_2$ would belong to different subtrees. We thus get a contradiction to the second property of tree decomposition.
	\end{proof}
	
	\section{Finding a Tree Decomposition of Small Depth}
	\label{sec:decomp}
	In this section, we show how to compute a binary tree decomposition (say $T'$) with logarithmic depth and treewidth $O(w)$. We require the tree decomposition to have logarithmic depth because our main algorithm for reachability (Algorithm \ref{algo:reach}) might potentially store reachability information for all vertices corresponding to the bags of treenodes in a path from the root to a leaf. Once the depth is reduced to $O(\log n)$ with bag size being $O(w)$, the algorithm will only need to store reachability information of $O(w\log n)$ vertices. Thus, we prove the following theorem. 
	
	\begin{theorem} 
		\label{thm:decomp}
		Given as input $\langle G, T \rangle$ where $G$ is a graph and $T$ is a tree decomposition of $G$ with treewidth $w$, there exists an algorithm working simultaneously in $O(w\log n)$ space and polynomial time which outputs a binary tree decomposition $T'$ of $G$ which has treewidth $6w + 6$ and depth $O(\log n)$.
	\end{theorem}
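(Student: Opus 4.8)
The plan is to build $T'$ top‑down by recursively separating $G$ with the bags supplied by Lemma~\ref{lem:sep}. Each node of $T'$ that I create is associated with a set $R \subseteq V(G)$ of \emph{active} vertices together with a \emph{boundary} $\partial \subseteq V(G) \setminus R$ consisting of those vertices outside $R$ that have a neighbour inside $R$; the root is associated with $R = V(G)$ and $\partial = \emptyset$. At a node with data $(R,\partial)$ I apply Lemma~\ref{lem:sep} to the set $U = R$ to obtain a treenode $t^*$ of $T$ whose bag $S := B(t^*)$ separates $R$, and I place the bag $\partial \cup S$ at this node of $T'$. I then consider the connected components of $G[R \setminus S]$, split them into two groups of roughly equal total size, and recurse on the two groups, passing to each group $C$ the data $(C,\partial_C)$ with $\partial_C := (\partial \cup S) \cap N(C)$. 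The recursion stops when $R$ has constant size, where a single bag $\partial \cup R$ is emitted.

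The depth bound is the easy part. By the defining property of the separator in Lemma~\ref{lem:sep}, every component of $G[R \setminus S]$ contains at most $|R|/2$ vertices of $R$, so the two groups can always be chosen so that each child's active set is a constant fraction smaller than $R$; consequently the recursion has depth $O(\log n)$, and being a two‑way branching it yields a binary $T'$ of depth $O(\log n)$. Coverage of vertices and edges is straightforward, and the connectivity axiom follows once one checks that a vertex is dropped from a boundary only when it has no neighbour left in the active set, so the set of nodes whose bag contains a fixed vertex forms a connected subtree of $T'$.

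The width bound is where the real work lies, and it is the step I expect to be the main obstacle. Each emitted bag has size $|\partial| + |S| \le |\partial| + (w+1)$, so everything hinges on keeping $|\partial|$ bounded by $O(w)$ uniformly over the recursion; the target treewidth $6w+6$ corresponds to maintaining the invariant that $\partial$ is covered by a constant number (roughly five) of bags of $T$. The naive bound only gives that $\partial_C \subseteq \partial \cup S$ is covered by one more bag than $\partial$, which would let the boundary grow by a bag per level and blow the width up to $\Theta(w\log n)$. To prevent this I will exploit the structural fact established inside the proof of Lemma~\ref{lem:sep}: every connected component of $G[R \setminus S]$ is contained in the vertices of a single subtree of $T \setminus \{t^*\}$. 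This confinement forces all external neighbours of a component to lie in the single bag $S$ together with only those old boundary bags that are adjacent, in $T$, to that component's subtree; the crux is to argue from this adjacency that a stale boundary bag is \emph{absorbed} by the fresh separator $S$ whenever the component moves away from it, so that the number of bags needed to cover $\partial_C$ never exceeds the constant maintained for $\partial$.

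Finally, for the space and time bounds I will not store $T'$ or the sets $R$ explicitly — either could have size $\Omega(n)$ — but will generate $T'$ by a depth‑first traversal that keeps only the root‑to‑current‑node path. Along such a path I store the $O(\log n)$ separator bags produced by Lemma~\ref{lem:sep}, each of size $O(w)$, together with one bit per level recording which group was taken; since $\partial$ is covered by a constant number of these bags, both $R$ and $\partial$ are recoverable from this $O(w\log n)$‑bit record. Testing whether a given vertex is active reduces to a sequence of undirected connectivity queries against the stored separators, solvable in logarithmic space and polynomial time by Reingold's theorem, and this in turn lets me evaluate the weight function $\alpha$ of Lemma~\ref{lem:sep} and locate the balanced treenode $t^*$ within the same bounds. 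As the recursion has depth $O(\log n)$ and each node is processed in polynomial time, the whole construction runs in $O(w\log n)$ space and polynomial time.
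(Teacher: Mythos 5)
Your plan isolates the right difficulty --- keeping the inherited boundary $\partial$ of size $O(w)$ --- but the fix you propose for it is precisely the step that does not go through, and it is not how the paper proceeds. The confinement fact you cite gives only one-sided absorption: if an old boundary bag $t_j$ lies in a \emph{different} subtree of $T \setminus \{t^*\}$ than the one containing the bags of the component $C$, then any $v \in B(t_j) \cap N(C)$ lies in a bag meeting $C$, hence (third property of tree decompositions) on every bag along the tree path from $t_j$ to that bag, which passes through $t^*$; so $v \in S$ and $t_j$ is absorbed. But an old bag lying in the \emph{same} subtree as $C$ is not absorbed by anything --- its vertices need only lie in the bag where its path attaches to $C$'s territory, which is in general a fresh bag of $T$, distinct for different old bags. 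Nothing forces the component to ``move away'' from stale bags, so in the worst case the number of covering bags grows by one per level, which is exactly the $\Theta(w\log n)$ blow-up you were trying to avoid. The problem is compounded by your binary grouping: a child's active set is a \emph{union} of components that may touch several subtrees of $T\setminus\{t^*\}$ simultaneously, and then even the other-subtree absorption fails (the argument does work for a single component when $T$ is a path, where the component stays sandwiched between two bags, but not for general trees, and not for groups).

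The paper's construction avoids any ``covered by $O(1)$ bags of $T$'' invariant. In Definition~\ref{def:rdtree}, at a node with boundary $Z$ and active component $\indgraph{Z}{r}$, it removes not only a separator $\sep(V(\indgraph{Z}{r}))$ of the active set but also a separator $\sep(Z)$ of the \emph{boundary itself}, both of size at most $w+1$ by Lemma~\ref{lem:compsep}. Because $\sep(Z)$ separates $Z$, each resulting component is adjacent to at most $\lvert Z \rvert/2$ old boundary vertices, giving the recurrence $\lvert Z'_i\rvert \le \lvert Z\rvert/2 + 2(w+1)$ and hence the invariant $\lvert Z \rvert \le 4w+4$ (Lemma~\ref{lem:rdtree_prop}); each bag is $Z$ together with the two fresh separators restricted to the component, so the width is $6w+6$ --- a purely counting argument, with no structural claim about where boundary vertices sit in $T$. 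Binarity is also obtained differently: the recursion branches once per component (unbounded degree), and the resulting logarithmic-depth decomposition is converted into a binary balanced one afterwards by the logspace transformation of Elberfeld et al.\ (Lemma~\ref{lem:elberfeld}), rather than by grouping components two ways inside the recursion. Your depth, coverage, and space-by-recomputation arguments are sound and essentially match the paper's (Algorithms~\ref{algo:child} and~\ref{algo:parent}), but to repair the width bound you need the boundary-separation idea (or an equally strong substitute), not absorption.
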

	
	We will now develop the framework that will help us to prove Theorem \ref{thm:decomp}.
	
	First, we show how to compute a vertex separator of a given set $U$ in $G$ in polynomial time and $O(w\log n)$ space. We cycle through every node in the tree $T$ and store the set of vertices in $B(t)$. Doing this requires $O(w\log n)$ space. Then using Reingold's undirected reachability algorithm \cite{Reingold}, we count the number of vertices of $U$ in each of the components of $G[V(G) \setminus B(t)]$. By Lemma \ref{lem:sep}, at least one of these sets $B(t)$ would be a separator of $U$ in $G$. Its size will be the size of $B(t)$ for some treenode $t$. Hence it can be at most $w+1$. We summarise this procedure in Lemma \ref{lem:compsep}.
	
	\begin{lemma}
		\label{lem:compsep}
		Given as input $\langle G, T, U \rangle$ where $G$ is a graph, $T$ is a tree decomposition of $G$ with treewidth $w$, and $U$ is a subset of $V(G)$, there exists an $O(w\log n)$ space and polynomial time algorithm that computes $\sep(U)$ and $|\sep(U)|$ is at most $w+1$.
	\end{lemma}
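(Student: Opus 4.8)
The plan is to turn the existential guarantee of Lemma~\ref{lem:sep} into an explicit search. Since that lemma already asserts that some bag $B(t)$ is a separator of $U$ in $G$, it suffices to iterate over all treenodes $t \in V(T)$ and output the first bag that passes a separator test. First I would loop over every $t$ in $V(T)$, loading the bag $B(t)$ into memory. Because $T$ has treewidth $w$ we have $|B(t)| \le w+1$, so storing $B(t)$ together with the current index $t$ costs only $O(w\log n)$ space; this also immediately yields the size bound $|\sep(U)| \le w+1$ claimed in the statement.

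The core subroutine is a space-efficient separator test: given the stored bag $B(t)$, decide whether every connected component of $G[V(G)\setminus B(t)]$ contains at most $|U|/2$ vertices of $U$. The key idea is that connectivity in $G[V(G)\setminus B(t)]$ can be decided in logarithmic space. Treating $G$ as undirected (as we do throughout for tree decompositions), I would run Reingold's undirected reachability algorithm~\cite{Reingold} on the adjacency oracle of $G$ modified so that the vertices of $B(t)$ are invisible: whenever the algorithm enumerates a neighbor, I discard it if it lies in the stored set $B(t)$. Since $B(t)$ is already in memory, this simulation of the induced subgraph incurs no extra space beyond the $O(\log n)$ used by Reingold's algorithm.

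With this connectivity primitive in hand, I would compute, for each $u \in U\setminus B(t)$, the number of $u' \in U\setminus B(t)$ lying in the same component as $u$ in $G[V(G)\setminus B(t)]$, and maintain the maximum of these counts. This maximum equals the largest number of $U$-vertices in any single component, so $B(t)$ is a valid separator precisely when the maximum is at most $|U|/2$. I output the first $B(t)$ that passes, and Lemma~\ref{lem:sep} guarantees that at least one such $t$ exists. The counters and loop variables each use $O(\log n)$ space, so the dominant cost remains storing $B(t)$, giving the required $O(w\log n)$ bound; and since the algorithm performs at most $|V(T)|$ iterations, each with $O(|U|^2)$ logspace connectivity queries, the total running time is polynomial.

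The main obstacle is exactly this separator test under the $O(w\log n)$ space budget: we cannot afford to materialize the components of $G[V(G)\setminus B(t)]$, so the whole argument rests on reducing the test to repeated logspace connectivity queries on the induced subgraph, and in turn on simulating the deletion of $B(t)$ using only the already-stored bag. Once this simulation is arranged, the correctness and complexity claims follow directly from Lemma~\ref{lem:sep} and the definition of treewidth.
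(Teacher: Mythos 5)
Your proposal is correct and takes essentially the same approach as the paper: cycle through all treenodes $t$ storing $B(t)$ in $O(w\log n)$ space, use Reingold's logspace undirected reachability algorithm (with the stored bag's vertices treated as deleted) to check whether every component of $G[V(G)\setminus B(t)]$ contains at most $|U|/2$ vertices of $U$, and appeal to Lemma~\ref{lem:sep} for the existence of a passing bag, with $|\sep(U)|\le w+1$ following from the treewidth bound. Your per-vertex pairwise counting is just an explicit implementation of the paper's step of counting the vertices of $U$ in each component.
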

	
	\subsection{Constructing a Recursive Decomposition}
	As an intermediate step, we first construct a {\em recursive decomposition} of the graph which is a tree whose nodes represents a subgraph of $G$. The root node represents the entire $G$. We then remove a separator from it. We assume inductively that each of the connected components has their recursive decomposition and connect the root node to the roots of these recursive decompositions of connected components. We select a separator such that a small number of bits can encode each node. This recursive decomposition acts as an intermediate to our tree decomposition. Once we have a recursive decomposition of the graph, we add labels to each node such that it satisfies the properties of tree decomposition.
	
	\begin{definition}
		\label{def:rdtree}
		Let $Z \subseteq V(G)$ and a vertex $r \in (V(G) \setminus Z)$. Define $\indgraph{Z}{r}$ to be the subgraph of $G$ induced by the set of vertices in the connected component of $G[V(G) \setminus Z]$ which contains $r$. Define the tree $\recdec{Z}{r}$ which we call {\em recursive decomposition} as follows:
		\begin{itemize}
			\item The root of $\recdec{Z}{r}$ is $\langle Z, r \rangle$.
			\item Let $Z' = Z  \cup \sep(Z) \cup \sep(V(\indgraph{Z}{r})$ and let $r_1, \ldots, r_k$ be the lowest indexed vertices in each of the connected components of $G[(V(\indgraph{Z}{r}) \setminus Z']$. The children of the root are roots of the recursive decompositions $\recdec{Z'_i}{r_i}$ for each $i \in \{1, \ldots, k \}$, where $Z'_i$ is the set of vertices in $Z'$ that are adjacent to at least one vertex of $V(\indgraph{Z'}{r_i})$  in $G$.
		\end{itemize}
	\end{definition}
	
	We now show that for the graph $G$ the recursive decomposition tree structure has logarithmic depth, and every node in the tree can be encoded by a few bits. 
	
	\begin{lemma}
		\label{lem:rdtree_prop}
		Let $v_0$ be a vertex in $G$. Then the depth of the recursive decomposition $\recdec{\phi}{v_0}$ is at most $\log n$. Moreover, for a node $\langle Z, r \rangle$ in $\recdec{\phi}{v_0}$, we have $|Z| \leq 4w+4$.
	\end{lemma}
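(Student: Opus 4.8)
The plan is to prove both statements simultaneously by induction on the depth of $\recdec{\phi}{v_0}$, analysing how a node $\langle Z, r\rangle$ relates to its children $\langle Z'_i, r_i\rangle$. Write $W = V(\indgraph{Z}{r})$ and $Z' = Z \cup \sep(Z) \cup \sep(W)$, and let $C_i$ be the connected component of $G[W \setminus Z']$ whose lowest-indexed vertex is $r_i$. The first thing I would establish is the identity $V(\indgraph{Z'_i}{r_i}) = C_i$. Since $Z'_i \subseteq Z'$, the set $C_i$ stays connected after deleting only $Z'_i$, so $C_i \subseteq V(\indgraph{Z'_i}{r_i})$. Conversely, every vertex outside $W$ that is adjacent to $W$ must lie in $Z$ (as $W$ is a component of $G[V(G)\setminus Z]$), and every vertex of $W \setminus C_i$ adjacent to $C_i$ must lie in $Z'$; hence the whole boundary of $C_i$ sits inside $Z'$ and therefore, by its very definition, inside $Z'_i$. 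Consequently the component of $G[V(G)\setminus Z'_i]$ containing $r_i$ cannot leave $C_i$. This identity is what lets me control both the size of a child's induced subgraph and the size of its bag.

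For the depth bound I would exploit that $\sep(W) \subseteq Z'$. Deleting the larger set $Z'$ only refines components, so $C_i$ is contained in some component $F$ of $G[V(G) \setminus \sep(W)]$; by the defining property of $\sep(W)$ this $F$ contains at most $|W|/2$ vertices of $W$, and since $C_i \subseteq W$ we get $|V(\indgraph{Z'_i}{r_i})| = |C_i| \le |W|/2$. Thus the number of vertices of the induced subgraph at least halves from a node to each child. As the root $\recdec{\phi}{v_0}$ has an induced subgraph on at most $n$ vertices, after $\log n$ levels the size drops to at most one and no further children can be created, so the depth is at most $\log n$.

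The size bound $|Z| \le 4w+4$ is the part I expect to demand the most care, and it is precisely where the inclusion of $\sep(Z)$ in $Z'$ pays off. I would split $Z'_i$ according to whether each of its vertices comes from $\sep(W)$, from $\sep(Z)$, or from $Z$ itself. By Lemma~\ref{lem:compsep} the first two groups contribute at most $w+1$ vertices each. For the third, consider a vertex $z \in Z \setminus \sep(Z)$ adjacent to $C_i$: since $z \notin \sep(Z)$ and $z$ neighbours $C_i$, it lies in the same component $D$ of $G[V(G)\setminus \sep(Z)]$ as $C_i$, so $z \in D \cap Z$, and the separator property of $\sep(Z)$ bounds the number of such $z$ by $|Z|/2$. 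Combining the three contributions yields the recurrence $|Z'_i| \le |Z|/2 + 2(w+1)$, whose invariant solution, starting from $|Z| = 0$ at the root, is $|Z| \le 4w+4$, closing the induction. The delicate point throughout is the boundary analysis of the first paragraph, since bounding $|Z'_i \cap Z|$ hinges entirely on the fact that $C_i$ already avoids $\sep(Z)$; without removing $\sep(Z)$ the old boundary could persist in full and the bag size would grow uncontrollably.
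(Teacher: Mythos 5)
Your proof is correct and follows essentially the same route as the paper's: the depth bound via the halving of $V(\indgraph{Z}{r})$ forced by $\sep(V(\indgraph{Z}{r})) \subseteq Z'$, and the size bound via splitting $Z'_i$ into its $\sep(Z)$, $\sep(V(\indgraph{Z}{r}))$, and $Z \setminus \sep(Z)$ parts, bounding the last by $\lvert Z \rvert/2$ with the separator property of $\sep(Z)$ and closing the induction with the recurrence $\lvert Z'_i \rvert \le \lvert Z \rvert/2 + 2(w+1)$. The only difference is that you explicitly verify the identity $V(\indgraph{Z'_i}{r_i}) = V(\indgraph{Z'}{r_i})$ via the boundary analysis, which the paper merely asserts; this is a welcome addition, not a deviation.
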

	\begin{proof}
		We prove a more general result that for any set of vertices $Z \subseteq V(G)$ and a vertex $r \in (V(G) \setminus Z)$, the depth of $\recdec{Z}{r}$ is at most $\log n$. By Definition \ref{def:rdtree}, we have that the set $\sep(V(\indgraph{Z}{r}))$ is a subset of $Z'$. Hence removal of $Z'$ divides the graph $\indgraph{Z}{r}$ into components each of which is of size at most half that of the size of $\indgraph{Z}{r}$. Since $r_1, \ldots, r_k$ are chosen from these components, it follows that the size of $\indgraph{Z'}{r_i}$ is at most half of $\indgraph{Z}{r}$. Additionally, in Definition \ref{def:rdtree} the sets $Z'_i$ are chosen in such a manner that the graphs $\indgraph{Z'_i}{r_i}$ and $\indgraph{Z'}{r_i}$ are equivalent. This proves that the size of the graph $\indgraph{Z}{r}$ halves at each level of the recursive decomposition. Hence $\recdec{\phi}{v_0}$ would have at most $\log n$ depth.
		
		We prove the second part of the lemma by induction on the depth of $\recdec{\phi}{v_0}$. This is trivially true for the root. Now let $\langle Z'_i, r_i \rangle$ be a child of $\langle Z, r \rangle$. Let $Z_i$ be the set of vertices of $Z \setminus \sep(Z)$ which are adjacent to at least one of the vertices of $V(\indgraph{Z'}{r_i})$ in $G$, and let $C_i$ be the unique connected component of $G[V(G) \setminus \sep(Z)]$ whose intersection with $\indgraph{Z'}{r_i}$ is not empty. Since $\sep(Z)$ is a separator of $Z$ in $G$, $C_i$ will contain at most $\lvert Z \rvert /2 $ vertices of $Z$ in it. This shows that $\lvert Z_i \rvert \leq \lvert Z \rvert /2$. By Definition \ref{def:rdtree}, we know that $\lvert Z'_i \rvert \leq \lvert Z_i \rvert + \lvert \sep(Z) \rvert + \lvert \sep(V(\indgraph{Z}{r}))\rvert$. The size of $\sep(V(\indgraph{Z}{r})) \leq w + 1$ and $\sep(Z) \leq w + 1$ by Lemma \ref{lem:compsep}. Lastly by induction $\lvert Z \rvert /2 \leq (4w + 4)/2$. Hence it follows that $|Z'_i| \leq 4w + 4$.
	\end{proof}
	
	We now show that the recursive decomposition tree corresponding to $G$ can be computed efficiently as well. To prove this, we give procedures that, given a node in the recursive decomposition tree, can compute its parent and children efficiently.
	
	\begin{algorithm}[h]
		\SetAlgoNoLine
		\DontPrintSemicolon
		\SetKwFor{For}{for}{do}{endfor}
		\SetKwFor{ForEach}{for each}{do}{endfor}
		\SetKwIF{If}{ElseIf}{Else}{if}{then}{else if}{else}{endif}
		\KwIn{$\langle G, T, v_0, Z, r\rangle$}
		\KwOut{Children of the node $\langle Z, r \rangle$ in $\recdec{\phi}{v_0}$}
		
		Compute $\sep(Z)$ using Lemma \ref{lem:compsep} \;
		Compute $\sep(V(\indgraph{Z}{r}))$ using Lemma \ref{lem:compsep} \;
		Let $Z' := Z \cup \sep(Z) \cup \sep(V(\indgraph{Z}{r}))$ \;
		\For{$v \in V(G)$}{
			\If{$v \in V(\indgraph{Z}{r})$ \AND ~$v$ is smallest indexed vertex in $\indgraph{Z'}{v}$}{
				Let $\widehat{Z} := \{v \in Z' \mid v \textrm{ is adjacent to } V(\indgraph{Z'}{v}) \textrm{ in } G\}$ \;
				Output $\langle \widehat{Z}, v \rangle$ \;
			}
		}
		\caption{Computes the children of the node $\langle Z, r \rangle$ in $\recdec{\phi}{v_0}$}
		\label{algo:child}
	\end{algorithm}
	
	Algorithm \ref{algo:child} outputs the children of $\langle Z, r \rangle$ in $\recdec{\phi}{v_0}$. Note that we don't explicitly store $\indgraph{Z}{r}$ but compute it whenever required. The separators in line $1$ and $2$ both have cardinality at most $w+1$ and can be computed in $O(w\log n)$ space and polynomial time by Lemma \ref{lem:compsep}. The cardinality of $Z$ is at most $4w +4$ by Lemma \ref{lem:rdtree_prop}. Therefore $|Z'|$ is at most $6w + 6$. The size of $\widehat Z$ computed would again be $4w+4$ by Lemma \ref{lem:rdtree_prop}. Thus the space required by Algorithm \ref{algo:child} is $O(w\log n)$.
	
	\begin{algorithm}[h]
		\SetAlgoNoLine
		\DontPrintSemicolon
		\SetKwFor{For}{for}{do}{endfor}
		\SetKwFor{ForEach}{for each}{do}{endfor}
		\SetKwIF{If}{ElseIf}{Else}{if}{then}{else if}{else}{endif}
		\KwIn{$\langle G, T, v_0, Z, r\rangle$}
		\KwOut{parent of the node $\langle Z, r \rangle$ in $\recdec{\phi}{v_0}$}
		
		Set $current := \langle \phi, v_0 \rangle$\; 
		\While{$\langle Z, r \rangle$ is not a child of $current$}{
			Let $\langle Z', r' \rangle$ be the child of $current$ such that $\indgraph{Z'}{r'}$ contains $r$\;
			Set $current := \langle Z', r' \rangle$\;
		}
		Output $current$\;
		\caption{Computes the parent of the node $\langle Z, r \rangle$ in $\recdec{\phi}{v_0}$}
		\label{algo:parent}
	\end{algorithm}
	
	Algorithm \ref{algo:parent} outputs the parent of $\langle Z, r \rangle$ in $\recdec{\phi}{v_0}$. It uses Algorithm \ref{algo:child} as a subroutine to get the children of a node in $\recdec{\phi}{v_0}$. Hence we can traverse the tree $\recdec{\phi}{v_0}$ in $O(w\log n)$ space and polynomial time. We summarize the above in Lemma \ref{lem:traverse}.
	
	\begin{lemma}
		\label{lem:traverse}
		Let $G$ be a graph, $T$ be a tree decomposition of $G$ with treewidth $w$ and $v_0$ be a vertex in $G$. Given $\langle G,T,v_0 \rangle$ and the node $\langle Z, r \rangle$ in $\recdec{\phi}{v_0}$, there exist algorithms that use $O(w\log n)$ space and polynomial time, and output the children and parent of $\langle Z, r \rangle$ respectively. As a consequence $\recdec{\phi}{v_0}$ can be traversed in $O(w\log n)$ space and polynomial time as well.
	\end{lemma}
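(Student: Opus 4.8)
The plan is to prove the two halves of the lemma --- the children oracle and the parent oracle --- separately and then combine them into a full traversal. Both algorithms are already displayed (Algorithm \ref{algo:child} and Algorithm \ref{algo:parent}), so what remains is to justify their correctness and to verify the stated space and time bounds carefully.

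First, for the children oracle I would argue that Algorithm \ref{algo:child} faithfully enumerates the children prescribed by Definition \ref{def:rdtree}. The key point is that the loop variable $v$ ranges over all of $V(G)$, and the test ``$v \in V(\indgraph{Z}{r})$ \AND{} $v$ is the smallest indexed vertex of $\indgraph{Z'}{v}$'' selects exactly one representative --- the lowest indexed vertex $r_i$ --- from each connected component of $G[V(\indgraph{Z}{r}) \setminus Z']$. For each such representative the set $\widehat{Z}$ computed by the algorithm is precisely the $Z'_i$ of the definition, so the output $\langle \widehat{Z}, v \rangle$ is exactly the corresponding child. For the complexity I would invoke the accounting already sketched after the algorithm: each separator call costs $O(w\log n)$ space and polynomial time by Lemma \ref{lem:compsep}; $\lvert Z \rvert \leq 4w+4$ and hence $\lvert Z' \rvert \leq 6w+6$ and $\lvert \widehat{Z} \rvert \leq 4w+4$ by Lemma \ref{lem:rdtree_prop}; and every membership test of the form ``$v \in V(\indgraph{\cdot}{\cdot})$'' reduces to an undirected reachability query in $G$ after deleting the relevant separator set, which runs in logarithmic space and polynomial time by Reingold's algorithm \cite{Reingold}. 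Since the only explicitly stored objects are $Z$, $Z'$, $\widehat{Z}$ together with a logspace subroutine workspace, the total is $O(w\log n)$ space and polynomial time.

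Second, for the parent oracle the correctness hinges on the fact that the vertex $r$ pins down a unique root-to-node path in $\recdec{\phi}{v_0}$. I would show by induction on depth that at every node $current = \langle Z', r' \rangle$ reached during the descent we have $r \in V(\indgraph{Z'}{r'})$; the inductive step uses that the subgraphs $\indgraph{Z'_i}{r_i}$ attached to the children are induced subgraphs of $G$ on disjoint connected components obtained after removing the separators added in Definition \ref{def:rdtree}, so $r$ lies in the component of exactly one child. The loop therefore advances deterministically and, because the size of $\indgraph{\cdot}{\cdot}$ halves at each level by Lemma \ref{lem:rdtree_prop}, it terminates within $\log n$ steps with $current$ equal to $\parent{\langle Z, r\rangle}$. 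For complexity the crucial observation is that the algorithm stores only the single node $current$, never the whole path: each iteration invokes Algorithm \ref{algo:child} and one reachability query to locate the child whose induced subgraph contains $r$, so every iteration uses $O(w\log n)$ space and polynomial time, and these bounds survive across the $O(\log n)$ iterations because the workspace is reused.

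Finally, to traverse the entire tree within the same bounds I would use the standard pointer-free tree walk driven by the two oracles: maintain only the current node and simulate a depth-first walk by, at each step, either descending to the first child output by Algorithm \ref{algo:child}, advancing to the next sibling (recomputed by listing the children of $\parent{\cdot}$ through Algorithm \ref{algo:parent} and Algorithm \ref{algo:child}, using the natural ordering of children by their representative index), or backtracking to the parent. Since each node $\langle Z, r\rangle$ is encoded in $O(w\log n)$ bits and only one node is held in memory at a time, the walk stays within $O(w\log n)$ space, and because each navigation step is polynomial and the tree has $\poly(n)$ nodes the total time is polynomial. The step I expect to be the main obstacle is the correctness of the parent oracle --- establishing that $r$ determines a unique descent and that the child components are genuinely disjoint --- since that is where the structural content of Definition \ref{def:rdtree} must be used, whereas the complexity bookkeeping for both oracles follows mechanically from Lemma \ref{lem:compsep}, Lemma \ref{lem:rdtree_prop}, and Reingold's algorithm.
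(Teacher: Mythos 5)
Your proposal is correct and follows essentially the same route as the paper: it relies on the same two displayed procedures (Algorithm \ref{algo:child} for children, the root-to-node descent of Algorithm \ref{algo:parent} for the parent), the same complexity bookkeeping via Lemma \ref{lem:compsep}, Lemma \ref{lem:rdtree_prop}, and Reingold's algorithm, and the same observation that a traversal needs only the current node since each node is encoded in $O(w\log n)$ bits. The only difference is that you spell out the correctness arguments (unique descent determined by $r$, disjointness of child components, the pointer-free DFS) in more detail than the paper, which states these steps without proof.
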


	\subsection{Constructing a New Tree Decomposition}
	
	We now construct a new tree decomposition of $G$ from the recursive decomposition defined earlier. The new tree decomposition will have the same tree structure as that of the recursive decomposition. However, we will assign it a labeling function. The subgraph that a node of the recursive decomposition represents is a connected component obtained after removing a set of separators from $G$. The corresponding label for this node in the new tree decomposition is simply the set of separator vertices in the boundary of this subgraph together with the separator required to subdivide this subgraph further. We formalize this in Definition \ref{def:newtree}.
	
	\begin{definition}
		\label{def:newtree}
		Let $\widehat{T}$ be the tree corresponding to the recursive decomposition $\recdec{\phi}{v_0}$. For a node $\langle Z, r \rangle$ in $\recdec{\phi}{v_0}$, we define the function $\widehat{B}(\langle Z, r \rangle)$ as follows: 
		\[ \widehat{B}(\langle Z, r \rangle) :=  Z \cup ((\sep(V(\indgraph{Z}{r})) \cup \sep(Z)) \cap V(\indgraph{Z}{r})).\]
	\end{definition}
	
	We first show that $\widehat{T}$ is a tree decomposition of $G$ as well, with labeling function $\widehat{B}$.
	
	\begin{lemma}
		\label{lem:newtree}
		The tree $\widehat{T}$ defined in Definition \ref{def:newtree} along with the labeling function $\widehat{B}$, is a tree decomposition of $G$ of width $6w+6$. Moreover, the depth of $\widehat{T}$ is at most $\log n$.
	\end{lemma}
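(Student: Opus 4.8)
The plan is to verify the three defining properties of a tree decomposition for the pair $(\widehat T, \widehat B)$, and then read off the width and depth bounds directly. The depth bound is immediate: by Definition \ref{def:newtree} the tree $\widehat T$ is the same tree as $\recdec{\phi}{v_0}$, whose depth is at most $\log n$ by Lemma \ref{lem:rdtree_prop}. For the width, note that for any node $\langle Z, r\rangle$ we have $|\widehat B(\langle Z, r\rangle)| \le |Z| + |\sep(Z)| + |\sep(V(\indgraph{Z}{r}))| \le (4w+4) + (w+1) + (w+1) = 6w+6$ by Lemma \ref{lem:rdtree_prop} and Lemma \ref{lem:compsep}, so every bag has at most $6w+6$ vertices and the width is at most $6w+6$. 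Throughout I may assume $G$ is connected (equivalently, I work inside the connected component of $v_0$, so that $V(\indgraph{\phi}{v_0}) = V(G)$); the argument applies to each component separately otherwise.

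A convenient device for the two covering properties is to follow a vertex $v$ down its \emph{component path}: the unique path $t_0 = \langle \phi, v_0\rangle, t_1, t_2, \ldots$ from the root in which $t_{i+1}$ is the child of $t_i$ whose induced subgraph contains $v$. At a node $\langle Z, r\rangle$ on this path, writing $Z' := Z \cup \sep(Z) \cup \sep(V(\indgraph{Z}{r}))$, the vertex $v \in V(\indgraph{Z}{r})$ either lies in $Z' \setminus Z$, in which case $v$ belongs to the separator part $(\sep(V(\indgraph{Z}{r})) \cup \sep(Z)) \cap V(\indgraph{Z}{r})$ of $\widehat B(\langle Z, r\rangle)$ and the path ends, or it survives into exactly one child component and the path continues. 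Since the component sizes strictly decrease and a leaf has empty residual component, every $v$ is eventually \emph{captured} at a unique node I call $t^*(v)$, where $v \in \widehat B(t^*(v))$. This establishes property (i). Observe for later use that separator-part membership forces the path to terminate, so a vertex lies in the separator part of a bag \emph{only} at $t^*(v)$.

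For property (ii), let $\{x, y\} \in E(G)$. Their component paths coincide until one of them is first captured, because two surviving adjacent vertices always fall into the same child component (the edge keeps them connected in $G[V(\indgraph{Z}{r}) \setminus Z']$). Let $\langle Z, r\rangle$ be the deepest node containing both in its component. If both are captured there, both lie in the separator part of $\widehat B(\langle Z, r\rangle)$ and we are done. Otherwise exactly one, say $x$, is captured while $y$ survives into a child $\langle Z'_i, r_i\rangle$; since $x \in Z'$ and $x$ is adjacent to $y \in V(\indgraph{Z'}{r_i})$, the construction of $Z'_i$ in Definition \ref{def:rdtree} places $x \in Z'_i \subseteq \widehat B(\langle Z'_i, r_i\rangle)$. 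I then maintain the invariant ``$x$ lies in the current boundary while $y$ lies in the current component'' down the rest of $y$'s component path: as long as $y$ survives, $x$ is adjacent to $y$ and already lies in the boundary set passed downward, hence re-enters the next boundary; and at the node where $y$ is finally captured, $x$ still lies in the boundary, so $x$ and $y$ occupy a common bag.

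The crux is property (iii), the running-intersection property, which amounts to showing that $N_v := \{t : v \in \widehat B(t)\}$ induces a connected subtree of $\widehat T$ for every vertex $v$. The two structural facts I would establish are: (a) $v$ lies in the separator part of a bag only at $t^*(v)$, as noted above; and (b) $v$ can enter the boundary $Z_t$ of a non-root node $t$ only if $v \in Z'_{\parent{t}}$, and — using that any vertex adjacent to a component of $G[V(G) \setminus Z_{\parent{t}}]$ must itself lie in that component or in $Z_{\parent{t}}$ — one shows that $v \in Z_t$ implies either $v \in Z_{\parent{t}}$ or $\parent{t} = t^*(v)$. Together these give that every node of $N_v$ other than $t^*(v)$ has its parent in $N_v$, so walking upward from any node of $N_v$ stays inside $N_v$ until it reaches $t^*(v)$; hence $N_v$ is a subtree rooted at $t^*(v)$, in particular connected. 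I expect fact (b) to be the main obstacle, since it is the step that pins down exactly how a vertex propagates through the boundary sets and relies on the non-adjacency of distinct components across a separator.
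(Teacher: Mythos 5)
Your proposal is correct and follows essentially the same route as the paper: both verify the three tree-decomposition properties directly from the structure of the recursive decomposition, and obtain the width bound from $\lvert Z\rvert \leq 4w+4$ plus the two separators of size $w+1$, and the depth bound from the fact that $\widehat{T}$ shares its tree structure with $\recdec{\phi}{v_0}$. The only organizational difference is in the running-intersection property: the paper argues downward (if $v$ is missing from a child's bag, then no descendant of that child has $v$ in its bag), whereas you argue upward (membership in a boundary $Z_t$ forces membership in the parent's bag unless the parent is the capture node $t^*(v)$); these are dual formulations of the same underlying fact about components and separators, though your version makes the connectivity of $\{t : v \in \widehat{B}(t)\}$ somewhat more explicit.
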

	\begin{proof}
		We claim that for a vertex $v$ in $\indgraph{Z}{r}$, there exists a vertex $\langle Z', r' \rangle$ in $\recdec{Z}{r}$ such that $\widehat{B}(\langle Z', r'\rangle)$ contains $v$. We prove this by induction on the height of the recursive decomposition $\recdec{Z}{r}$. If $\recdec{Z}{r}$ is just a single node, then $v$ is in $\sep(V(\indgraph{Z}{r}))$ by construction. Otherwise $v$ is either in $(\sep(Z) \cup \sep(V(\indgraph{Z}{r})))$ or in one of the connected components of $G[V(\indgraph{Z}{r}) \setminus (\sep(Z) \cup \sep(V(\indgraph{Z}{r})))]$. If $v$ is in $(\sep(Z) \cup \sep(V(\indgraph{Z}{r})))$, then $v$ is in $\widehat{B}(\langle Z, r \rangle)$ and we are done. Otherwise one of the children of $\langle Z, r \rangle$ will be $\langle \tilde{Z}, \tilde{r} \rangle$ such that $v$ is in $\indgraph{\tilde{Z}}{\tilde{r}}$. Now by induction hypothesis, there exists a vertex $\langle Z', r' \rangle$ in $\recdec{\tilde{Z}}{\tilde{r}}$  such that $\widehat{B}(\langle Z', r' \rangle)$ contains $v$. It follows that every vertex $v$ of $V(G)$ is contained in the label of at least one of the vertices of $\widehat{T}$, satisfying the first property of tree decomposition.
		
		We claim that for any edge $(u, v)$ in $G$ such that $\{u, v\} \subseteq V(\indgraph{Z}{r}) \cup Z$, either both $u$ and $v$ are in $\widehat{B}(\langle Z, r \rangle)$ or there exists a child $\langle Z'_i, r_i \rangle$ of $\langle Z, r\rangle$ such that $\{u, v\} \subseteq V(\indgraph{Z'_i}{r_i}) \cup Z'_i$. Since $u$ and $v$ are connected by an edge, there cannot exist any set of vertices $\widehat Z$ such that $u$ and $v$ are in different connected components of $G[V(G) \setminus \widehat Z]$. Let $Z' = Z \cup \sep(Z) \cup \sep(V(\indgraph{Z}{r}))$. If both $u$ and $v$ are in $Z'$, then they are in $\widehat{B}(\langle Z, r \rangle)$. Otherwise, let $r_i$ be the lowest indexed vertex in the connected component of $G[(V(\indgraph{Z}{r}) \setminus Z']$ which contains either of $u$ or $v$. Let $Z'_i$ is the set of vertices in $Z'$ that are adjacent to at least one of the vertices of $V(\indgraph{Z'}{r_i})$  in $G$. Now, if both $u$ and $v$ are not in $V(\indgraph{Z'_i}{r_i})$, then one of them have to be in $Z'_i$. Hence in all cases, $u$ and $v$ are contained in $V(\indgraph{Z'_i}{r_i}) \cup Z'_i$. Hence by induction on the height of the tree decomposition $\widehat{T}$ we have that there exists a treenode in $\widehat{T}$ whose bag contains both $u$ and $v$, satisfying the second property of tree decomposition.
		
		To establish the third property of tree decomposition we first show that if $v$ is not in $Z \cup V(\indgraph{Z}{r})$, then for no descendant $\langle \tilde{Z}, \tilde{r} \rangle$ of $\langle Z, r \rangle$ will $\widehat{B}(\langle \tilde{Z}, \tilde{r} \rangle)$ contain $v$. We show this by induction on the height of the recursive decomposition. If there is only one node in $\recdec{Z}{r}$, then $\widehat{B}(\langle Z, r \rangle)$ does not contain $v$ by definition. Otherwise, no connected component of $G[V(\indgraph{Z}{r}) \setminus Z']$ contains $v$. Also $Z'_i$ for any of its children will not contain $v$ as claimed. 
		
		Now let $\langle Z, r \rangle$ be a treenode in $\widehat{T}$. We claim that for any child $\langle Z'_i, r_i \rangle$ of $\langle Z, r \rangle$ if a vertex $v$ is in $\widehat{B}(\langle Z, r \rangle)$, then either $v$ is also in $\widehat{B}(\langle Z'_i, r_i \rangle)$ or no descendant of $\langle Z'_i, r_i \rangle$ has a bag corresponding to it which contains $v$. Let $\langle Z, r \rangle$ be a node in $\recdec{\phi}{v_0}$ such that $\widehat{B}(\langle Z, r \rangle) = X$. Since any connected component of $G[V(\indgraph{Z}{r}) \setminus \widehat{B}(\langle Z, r \rangle)]$ cannot contain $v$, $v$ is not in $V(\indgraph{Z'_i}{r_i})$ for any child $\langle Z'_i, r_i \rangle$ of $\langle Z, r \rangle$. Now if $v$ is not in $\widehat{B}(\langle Z'_i, r_i \rangle)$, then it implies that $v$ is not in $Z'_i \cup V(\indgraph{Z'_i}{r_i})$ as well. Hence the third property of tree decomposition is satisfied as well.
		
		For a vertex $\langle Z, r \rangle$ in $\recdec{\phi}{v_0}$, we have $\lvert Z \rvert \leq 4w + 4$, $ \sep(Z) \leq w + 1$ and $\sep((V(\indgraph{Z}{r}))  \leq w + 1$ as well. Hence $\widehat{B}(\langle Z, r \rangle) \leq 6w + 6$.
		
		Since the tree $\widehat{T}$ and $\recdec{\phi}{v_0}$ have the same structure, the bounds on their depths are the same.
	\end{proof}
	
	Next, we observe that given $\langle Z, r \rangle$, we can compute $\widehat{B}(\langle Z, r \rangle)$ in $O(w\log n)$ space and polynomial time. Hence we have the following Lemma.
	
	\begin{lemma}
		\label{lem:newtreetraverse}
		Given a graph $G$ and a tree decomposition $T$ of $G$ with treewidth $w$,  there is an algorithm that can compute a new tree decomposition $\widehat{T}$ of $G$ having treewidth at most $6w+6$ and depth at most $\log n$, using $O(w\log n)$ space and polynomial time. Moreover, the tree $\widehat{T}$ can be traversed in $O(w\log n)$ space and polynomial time as well.
	\end{lemma}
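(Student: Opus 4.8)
The plan is to observe that essentially all of the correctness work has already been done: that $\widehat{T}$ is a tree decomposition of $G$ of width $6w+6$ and depth at most $\log n$ is precisely the content of Lemma \ref{lem:newtree}. So the only remaining task is to argue the complexity bounds for constructing and traversing $\widehat{T}$. The key point to exploit is that $\widehat{T}$ shares its underlying tree structure with $\recdec{\phi}{v_0}$; the two differ only in the labeling function $\widehat{B}$. Consequently, traversing $\widehat{T}$ reduces to traversing $\recdec{\phi}{v_0}$ — which Lemma \ref{lem:traverse} already guarantees can be done in $O(w\log n)$ space and polynomial time via the parent and child subroutines — and, at each node visited, additionally computing the label $\widehat{B}(\langle Z, r \rangle)$.

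Thus the heart of the argument is to show that, given a node $\langle Z, r \rangle$, its label $\widehat{B}(\langle Z, r \rangle)$ can be computed within the same resource bounds. I would unfold Definition \ref{def:newtree}: $\widehat{B}(\langle Z, r \rangle) = Z \cup ((\sep(V(\indgraph{Z}{r})) \cup \sep(Z)) \cap V(\indgraph{Z}{r}))$. The set $Z$ is part of the encoding of the node and, by Lemma \ref{lem:rdtree_prop}, has size at most $4w+4$, so it is stored in $O(w\log n)$ bits. The separators $\sep(Z)$ and $\sep(V(\indgraph{Z}{r}))$ are each computed by Lemma \ref{lem:compsep} in $O(w\log n)$ space and polynomial time, and each has size at most $w+1$. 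Exactly as in Algorithm \ref{algo:child}, the induced subgraph $\indgraph{Z}{r}$ is never stored explicitly; membership of a vertex in $V(\indgraph{Z}{r})$ is tested on the fly by checking connectivity to $r$ in $G[V(G) \setminus Z]$ using Reingold's undirected reachability algorithm, which runs in logarithmic space. The intersection with $V(\indgraph{Z}{r})$ and the union with $Z$ are then routine set operations over a universe whose relevant elements number $O(w)$.

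Putting these together, computing a single label $\widehat{B}(\langle Z, r \rangle)$ costs $O(w\log n)$ space and polynomial time, and the resulting label has size at most $6w+6$ (as already verified in Lemma \ref{lem:newtree}), which is again $O(w\log n)$ bits. Composing this label computation with the tree traversal of Lemma \ref{lem:traverse} yields an algorithm that both constructs and traverses $\widehat{T}$ within $O(w\log n)$ space and polynomial time, as claimed. I do not anticipate a genuine obstacle here, since every ingredient is a direct invocation of an earlier lemma; the one thing to be careful about is ensuring the subroutines are \emph{composed} without blowing up the space — that is, recomputing $\indgraph{Z}{r}$ and the separators on demand rather than materializing them simultaneously — so that the space stays additive in the small quantities $|Z|$, $|\sep(Z)|$, and $|\sep(V(\indgraph{Z}{r}))|$ rather than multiplicative.
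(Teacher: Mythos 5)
Your proposal is correct and follows exactly the paper's route: the paper likewise derives correctness, width, and depth from Lemma \ref{lem:newtree}, reduces traversal of $\widehat{T}$ to the traversal of $\recdec{\phi}{v_0}$ given by Lemma \ref{lem:traverse}, and disposes of the remaining point with the one-line observation that $\widehat{B}(\langle Z, r \rangle)$ is computable in $O(w\log n)$ space and polynomial time via Lemma \ref{lem:compsep}. Your write-up simply makes that observation explicit (on-the-fly membership tests for $V(\indgraph{Z}{r})$ via Reingold, recomputing rather than storing intermediate sets), which is exactly what the paper intends.
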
 
	
	Note that the tree $\widehat{T}$ might not be a binary tree since a separator might disconnect the graph into more than two components. However, to decide reachability in the later part of this paper, we require the tree decomposition to have bounded degree as well. We achieve this by using the following lemma from Elberfeld, Jakoby, and Tantau to get the required tree decomposition $T'$.
	
	\begin{lemma}\cite{Elberfeld}
		\label{lem:elberfeld}
		There is a logspace algorithm that on the input of any logarithmic depth tree decomposition of a graph $G$ outputs a logarithmic depth, binary balanced tree decomposition of $G$ having the same treewidth.
	\end{lemma}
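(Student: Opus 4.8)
The plan is to prove the lemma in two stages that never touch the actual contents of the bags: first make the tree binary, then rebalance it to logarithmic depth, in both stages inserting only \emph{copies} of bags that already occur, so that the width is never enlarged. Throughout I treat the input as a tree decomposition $(T,B)$ of width $w$ and depth $O(\log n)$; since it is a tree decomposition of an $n$-vertex graph its number of treenodes is at most $\poly(n)$, so $O(\log(\text{number of treenodes})) = O(\log n)$.

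\emph{Stage 1 (degree reduction).} I would replace every treenode $t$ having children $c_1,\dots,c_k$ by a balanced binary ``fan'' whose $k$ leaves are identified with the subtrees rooted at $c_1,\dots,c_k$, and all of whose internal nodes carry a fresh copy of the single bag $B(t)$. Because every node of the fan carries exactly $B(t)$, the width is unchanged, and the connectivity property survives: a vertex of $B(t)$ occupies the entire connected fan, while a vertex occurring only inside one child subtree is untouched. This produces a binary tree decomposition of the same width, though the substitution can increase the depth; Stage 2 repairs the depth.

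\emph{Stage 2 (balancing).} Here I would recurse on the tree structure using centroids. For a subtree $S$ of the current tree I compute a centroid treenode $t^{*}$ whose deletion leaves components each of size at most $|S|/2$, place a copy of $B(t^{*})$ at the root of the output built for $S$, recursively balance each component, and hang the balanced pieces beneath the root through a balanced routing structure all of whose internal nodes are again copies of $B(t^{*})$. As in Stage 1 the copies of $B(t^{*})$ form a connected backbone, so every vertex of $B(t^{*})$ keeps a connected occurrence set, while a vertex absent from $B(t^{*})$ is confined to a single component (because $t^{*}$ separates the components) and remains connected there by induction. No union of two distinct bags is ever formed, so the width stays exactly that of the input, and the recursion has $O(\log n)$ levels because component sizes halve.

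The two genuine obstacles are width preservation and depth control, and they pull against each other. Keeping the width \emph{exactly} equal forbids the usual balancing shortcut of absorbing a separating bag into the bags of the pieces below it; the device above---routing through copies of the one bag $B(t^{*})$ rather than merging---is precisely what makes exact preservation possible, and the bulk of the argument is verifying that the three axioms of Definition's tree decomposition all survive this surgery. The subtler point is that each routing structure contributes up to $O(\log(\deg(t^{*})))$ to the depth, so a naive count only gives $O(\log^2 n)$; I would control this by selecting the centroid with respect to a weighting that charges each routing level against the halving of the component size, so that the per-level $\log(\deg)$ contributions telescope to an overall $O(\log n)$. Finally, for the logspace bound I would never materialise the output tree: each of its nodes is named by the short sequence of centroid choices that reaches it, and its parent, children, and bag are recomputed on demand by counting subtree sizes and testing tree-connectivity, both of which are logspace by Reingold's theorem \cite{Reingold}, in direct analogy with the on-the-fly traversal of Lemma \ref{lem:traverse}.
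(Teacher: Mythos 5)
Your Stage 1 is fine and standard, but Stage 2 has a genuine gap: centroid rebalancing with verbatim copies of the centroid bag violates the third (connectivity) property of tree decompositions, and your backbone argument does not repair it. The copies of $B(t^{*})$ are indeed connected \emph{to each other}, but a vertex $v\in B(t^{*})$ may also occur inside a component $T_i$ of $T\setminus\{t^{*}\}$, and the induction hypothesis only tells you that the occurrences of $v$ inside the balanced build of $T_i$ are connected \emph{among themselves} --- it does not place them adjacent to the backbone, because the root of that build is the centroid of $T_i$, whose bag need not contain $v$. Concretely, let $T$ be a path of seven treenodes $a$--$b$--$c$--$d$--$e$--$f$--$g$ and let $v$ occur exactly in $B(c)$ and $B(d)$. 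The centroid of $T$ is $d$, and the centroid of the component $\{a,b,c\}$ is forced to be $b$ (removing $a$ or $c$ leaves a component of size $2>3/2$). In your output, $v$ occurs at the root (copy of $B(d)$) and at the node carrying $B(c)$, which hangs \emph{below} the node carrying $B(b)\not\ni v$: the occurrence set of $v$ is disconnected, so the output is not a tree decomposition. This failure is precisely why the classical balancing constructions (Bodlaender--Hagerup style) merge the separating bag into bags beneath it and pay a constant-factor width increase --- the step you deliberately forbid. A structural tell that something must be wrong: Stage 2 never uses the hypothesis that the input has logarithmic depth, so if it were sound it would rebalance an \emph{arbitrary} tree decomposition to logarithmic depth at exactly the same width; this is false in general (one can check, for instance, that any width-$1$ tree decomposition of the $n$-vertex path has depth $\Omega(n)$, which is why depth reduction is known to require widening the bags).

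The paper itself does not prove this lemma --- it imports it from Elberfeld, Jakoby, and Tantau --- but the correct route is close to your Stage 1 alone, made quantitative, and it is exactly where the log-depth hypothesis enters. Replace each node $t$ with children $c_1,\dots,c_k$ by a \emph{weight-balanced} fan of copies of $B(t)$: weighting child $c_i$ by the size $s_{c_i}$ of its subtree, place $c_i$ at depth $O(1+\log(s_t/s_{c_i}))$ inside the fan. Connectivity survives as in your Stage 1, the width is unchanged, and along any root-to-leaf path the fan contributions telescope: the total depth is $\sum_j O(1+\log(s_{t_j}/s_{t_{j+1}})) = O(d+\log N)$, where $d$ is the input depth and $N$ the number of treenodes. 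This is $O(\log n)$ precisely because $d=O(\log n)$ is assumed; no second rebalancing stage is needed. Subtree sizes, weighted-median split points, and fan navigation are all computable in logspace on trees (your closing remarks about on-the-fly recomputation apply essentially verbatim here, and are more easily justified than for centroid recursion, where naming a node by its full sequence of centroid choices would cost $\Theta(\log^2 n)$ bits unless the centroids are chosen canonically).
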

	
	Now combining Lemma \ref{lem:newtreetraverse} and Lemma \ref{lem:elberfeld} we get the proof of Theorem \ref{thm:decomp}.
	
	We observe here that the input tree decomposition $T$ is used only to compute a vertex separator in $G$. For graphs where vertex separators can be computed efficiently, such as planar graphs \cite{imai} or constant treewidth graphs, the requirement of input tree decomposition can be waived.
	
	\section{Deciding Reachability using a Balanced Binary Tree Decomposition}
	\label{sec:reach}
	In this section, we show that given a graph $G$ along with a binary, balanced tree decomposition $T$ whose depth is $O(\log n)$; there exists an efficient algorithm to decide reachability in $G$ in $O(w\log n)$ space and polynomial time. In particular, we show the following theorem.
	
	\begin{theorem} 
		\label{thm:reach}
		Given $\langle G, T, u, v \rangle$ as input, where $G$ is a graph on $n$ vertices and $T$ is a binary balanced tree decomposition of $G$ having depth $\Tdepth$ and treewidth $w$, there exists an $O(w\Tdepth + \log n)$ space and polynomial time algorithm that solves reachability in $G$.
	\end{theorem}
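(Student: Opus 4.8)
\section*{Proof proposal for Theorem~\ref{thm:reach}}

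The plan is to compute the single query $\reach{u}{v}$ by a recursion that follows the tree structure of $T$, storing only a bounded amount of reachability information per level. First I would root $T$ and, for each treenode $t$, let $\gad{t}$ denote the subgraph of $G$ induced by the union of all bags in the subtree hanging from $t$. The defining properties of a tree decomposition guarantee that for every child $t_c$ of $t$ the set $B(t)\cap B(t_c)$ separates $\gad{t_c}$ from the remainder of $\gad{t}$; consequently any walk inside $\gad{t}$ can enter or leave $\gad{t_c}$ only through vertices of $B(t)$. This is the structural fact that drives the recursion: to decide reachability inside $\gad{t}$ between two portal vertices $a,b\in B(t)$ it suffices to know the edges of $G$ with both endpoints in $B(t)$ together with, for each child $t_c$ and each pair $c,d\in B(t)\cap B(t_c)$, whether $c$ reaches $d$ inside $\gad{t_c}$ --- which is exactly a query of the same form one level down. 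The reachability relation on $B(t)$ is then the transitive closure of these base relations, and the original query reduces to such portal-to-portal queries after inserting $u$ and $v$ into the relevant bags.

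For the space bound I would answer each portal query not by materialising the full $O(w)\times O(w)$ reachability relation (which over $\Tdepth$ levels would cost $O(w^2\Tdepth)$ bits), but by a fixpoint loop that maintains only a reachable set $R\subseteq B(t)$ of $O(w)$ bits: starting from the source portal, repeatedly extend $R$ along edges inside $B(t)$ and along child excursions, recomputing each excursion on demand by a recursive call. Since exactly one query is active on each level at any time, the recursion stack holds a single $O(w)$-bit reachable set per level, for $O(w\Tdepth)$ bits in all; as explained below, the entire exploration is driven by one global pointer into a fixed schedule, so the control state is a single $O(\log n)$-bit index rather than a counter per frame. Together this gives the claimed $O(w\Tdepth+\log n)$ space.

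The hard part is the running time. Computing each child excursion on demand and from scratch makes the work multiply across levels: a query at $t$ triggers $O(w)$ fixpoint rounds, each scanning $O(w^2)$ child queries, so the number of leaf calls grows by a $\poly(w)$ factor per level and unwinds to $w^{O(\Tdepth)}=n^{O(\log w)}$ --- only quasipolynomial, the usual Savitch-type blow-up. To bring this down to a polynomial I would invoke the universal-sequence technique of Asano et al. The key observation is that reachability is witnessed by a \emph{simple} path, which has at most $n$ edges; hence the recursive decomposition of a witnessing path into portal-to-portal segments across the $\Tdepth$ levels has only polynomially many leaves, even though a blind search explores exponentially many. The universal sequence $\lseq{\cdot}{\cdot}$ is a single fixed schedule of child-descent instructions, of polynomial length $\len{\cdot}{\cdot}$, with the property that the crossing pattern of \emph{every} path of bounded length is realised as the algorithm propagates reachability while following the schedule; sharing the work of related subqueries through their $\common{\cdot}{\cdot}$ keeps the stack within the space budget and avoids recomputation. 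Establishing this covering property and bounding $\len{\cdot}{\cdot}$ by a polynomial is the crux of the whole argument.

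Finally I would verify correctness in two directions. In one direction, every reachability bit the algorithm sets is justified either by an actual edge of $G$ or by a child excursion that was itself verified recursively, so by induction on the depth the algorithm never reports a path that does not exist. In the other direction, given an actual path from $u$ to $v$, I would decompose it recursively into portal-to-portal segments, check that its sequence of child crossings is captured as the algorithm advances through $\lseq{\cdot}{\cdot}$, and conclude by induction on $\Tdepth$ that the fixpoint loops eventually mark $v$ as reached. Combining the two directions with the space and time bounds above yields Theorem~\ref{thm:reach}.
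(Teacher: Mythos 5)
You have correctly identified the architecture of the paper's proof --- per-level $O(w)$-bit reachability vectors for the space bound, and the universal sequences of Asano et al.\ to kill the Savitch-type $w^{O(\Tdepth)}$ blow-up in time --- but your proposal stops exactly where the real work begins, and you say so yourself (``the crux of the whole argument''). What is missing is precisely the content of the paper's Lemmas~\ref{lem:bsize} and~\ref{lem:iter}. Concretely: (a) the schedule has to be defined, not just postulated; the paper sets $\lseq{t}{d} = \lseq{\leftchild{t}}{c_1}\diamond\lseq{\rightchild{t}}{c_1}\diamond\lseq{\leftchild{t}}{c_2}\diamond\lseq{\rightchild{t}}{c_2}\diamond\cdots$ where $c_i$ is the $i$-th entry of the universal sequence $\sigma_{\log d}$. (b) The covering property must be proved (Lemma~\ref{lem:iter}): a path of length at most $d$ in $\gad{t}$ splits into maximal segments lying alternately in $\gad{\leftchild{t}}$ and $\gad{\rightchild{t}}$; the segment endpoints lie in $\vanc{t}$ because a vertex incident to edges of both sides sits in bags of both subtrees and hence, by the third tree-decomposition property, in an ancestor bag; and since the segment lengths sum to at most $d$, the domination property of $\sigma_{\log d}$ (Lemma~\ref{lem:us}) supplies entries $c_{i_1}\geq l_1,\ldots,c_{i_k}\geq l_k$ \emph{in order}, which is what lets the induction on the height of $t$ go through. (c) The length bound must be computed (Lemma~\ref{lem:bsize}): $\len{\Tdepth}{d}=2^{\Tdepth}d\binom{\Tdepth+\log d}{\log d}$, which is polynomial in $n$ only because $\Tdepth=O(\log n)$; this is the one place the balanced-depth hypothesis enters, and without it the claim of polynomial time is unsupported.

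Two of your supporting remarks also do not carry the weight you place on them. First, the observation that a witnessing path is simple, hence has polynomially many portal-to-portal segments, does not produce a fixed polynomial schedule: different paths decompose differently, and the whole difficulty is that one schedule must dominate \emph{all} decompositions simultaneously --- that is exactly the domination property of Lemma~\ref{lem:us}, not a consequence of simplicity (simplicity is only used to bound the length budget by $n$). Second, the mechanism of ``sharing the work of related subqueries through their common prefix'' with a single $O(\log n)$-bit global pointer is left undefined; the paper makes this concrete by abandoning recursion altogether and running one flat loop over $\lseq{\rt{T}}{n}$ with two alternating bit-vectors indexed by the union of bags on a root-to-leaf path, which is what legitimizes the small control state. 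As written, your proposal is a correct plan for the paper's proof in which its two central lemmas are asserted rather than proved.
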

	
	We first state the notation required to prove Theorem \ref{thm:reach}. This notation is commonly used to describe dynamic programming algorithms which use tree decomposition. Let $T$ be a rooted binary tree. We denote $\rt{T}$ to be the root of $T$ and for a node $t \in T$, we denote $\leftchild{t}$ and $\rightchild{t}$ to be the left and right child of $t$ respectively (the value is NULL if a child does not exist). For two nodes $t$ and $t'$ in $T$, if $t'$ lies in the path from $\rt{T}$ to $t$, then we say that $t'$ is an {\em ancestor} of $t$ and $t$ is a {\em descendent} of $t'$. For a treenode $t$, let $B_e(t)$ denote the set of edges of $G$ whose both endpoints are in $B(t)$. We define a subgraph of $G$ with respect to the treenode $t$ consisting of the ancestor vertices of $t$. Formally, the vertex set is $\vanc{t} = \bigcup_{\{ t' \textrm{ is an ancestor of }t\}}B(t')$, the edge set is $\eanc{t} = \bigcup_{\{ t' \textrm{ is an ancestor of }t\}}B_e(t')$ and the graph $\ganc{t} = (\vanc{t}, \eanc{t})$. Now, we define a subgraph of $G$ with respect to the treenode $t$ consisting of the ancestor as well as descendent vertices of $t$. Formally, the vertex set is $\vad{t} = \bigcup_{\{ t' \textrm{ is an ancestor or descendent of }t\}}B(t')$, the edge set is $\ead{t} = \bigcup_{\{ t' \textrm{ is an ancestor or descendent of }t\}}B_e(t')$ and the graph $\gad{t} = (\vad{t}, \ead{t})$.
	
	We assume that the vertices $u$ and $v$ are in $\rt{T}$, for otherwise, we can add them in all of the bags of the given tree decomposition. Also, we assume that $n$ is a power of $2$.
	
	We now explain our algorithm. Let $S_1, S_2, \ldots S_m$ be subsets of $V(G)$ which covers all the vertices. Given a sequence of these, say $\langle S_{i_1}, S_{i_2}, \ldots, S_{i_k} \rangle$ we can check if there is a path from $u$ to $v$ in $G$ corresponding to this sequence in the following way. Begin with a bit-vector where each bit corresponds to a vertex in $S_1$. For all vertices $w$ in $S_1$ we check if there exists an edge $(u, w)$ in $G$ and set the bit corresponding to $w$ to $1$ in the bit-vector. We set the bit corresponding to $u$ to $1$ too if it is present in $S_1$. The marked vertices of $S_1$ have a path from $u$ to them. Now, using this vector, we mark vertices in $S_2$ which are reachable from the marked vertices in $S_1$ using at most $1$ edge in the second bit-vector. In general, given a bit-vector with marked vertices corresponding to $j$'th element in the sequence, we use the other bit-vector to mark vertices in $S_{j+1}$. We iterate $k$ times to mark the vertices in $S_{i_k}$.
	
	We say that a sequence $\langle S_{i_1}, S_{i_2}, \ldots, S_{i_k} \rangle$ is $good$ for a possible path $p$ from $u$ to $v$ when the existence of path $p$ in the graph implies that bit corresponding to $v$ is set to $1$ after the execution of the above discussed procedure. Our algorithm uses the tree decomposition to divide the vertices into sets and then finds a sequence that is $good$ for all the possible paths. The sets of vertices that our algorithm uses are $V_t$ for all the leaves $t$ of $T$. We can map a sequence of sets of vertices to a sequence of leaves of the graph. We thus need a sequence of leaves that is $good$ for all possible paths from $u$ to $v$. In Section $\ref{subsec:goodseq}$, we show how to construct a $good$ sequence of leaves. Then, in Section $\ref{subsec:solreach}$, we show how to use this sequence to solve reachability.
	
	\subsection{Constructing a Good Sequence of Leaves}
	\label{subsec:goodseq}
	We will be using $universal$ $sequences$ and the following lemma about it from Asano et al. to construct the sequence of leaves.
	
	For every integer $s \geq 0$, a {\em universal sequence} $\sigma_s$ of length $2^{s+1}-1$ is defined as follows: 
	\[ \sigma_s = \begin{cases}     \langle 1 \rangle & s = 0 \\
	\sigma_{s-1} \diamond \langle 2^s \rangle \diamond \sigma_{s-1} & s > 0
	\end{cases}
	\]
	where $\diamond$ is the concatenation operation.
	
	\begin{lemma}\cite{Asano}
		\label{lem:us}
		The universal sequence $\sigma_s$ satisfies the following properties:
		\begin{itemize}
			\item[-] Let $\sigma_s = \langle c_{1}, \ldots, c_{2^{s+1} -1} \rangle$. Then for any positive integer sequence $\langle d_1, \ldots, d_x \rangle$ such that $\Sigma d_i \leq 2^s$, there exists a subsequence $\langle c_{i_1}, \ldots, c_{i_x} \rangle$ such that $d_j \leq c_{i_j}$ for all $j \in [x]$.
			\item[-] The sequence $\sigma_s$ contains exactly $2^{s-i}$ appearances of the integer $2^i$ and nothing else.
			\item[-] The sequence $\sigma_s$ is computable in $O(2^s)$ time and $O(s)$ space.
		\end{itemize}
	\end{lemma}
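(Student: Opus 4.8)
The plan is to establish the three properties separately, of which the first (universality) is the substantive one while the other two follow by routine induction and direct computation. For the second property I would induct on $s$: the base case $\sigma_0 = \langle 1 \rangle$ has exactly $2^{0}=1$ copy of $2^0$ and nothing else, and for $s>0$ the decomposition $\sigma_s = \sigma_{s-1} \diamond \langle 2^s \rangle \diamond \sigma_{s-1}$ doubles the count of each $2^i$ with $i<s$ (giving $2\cdot 2^{(s-1)-i} = 2^{s-i}$) while contributing exactly one new $2^s$ from the middle term. Since by induction neither copy of $\sigma_{s-1}$ contains any value exceeding $2^{s-1}$, the resulting sequence contains only powers of two up to $2^s$, with the claimed multiplicities and total length $2|\sigma_{s-1}|+1 = 2^{s+1}-1$.

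The heart of the lemma is the first property, which I would also prove by induction on $s$, and this is where I expect the main difficulty to lie: in choosing the right place to \emph{split} the demand sequence $\langle d_1, \ldots, d_x\rangle$ across the two recursive copies. The base case $s=0$ forces $x\le 1$ and is immediate. For $s\ge 1$, writing $P_k = \sum_{i=1}^{k} d_i$ for the partial sums, I would let $k$ be the largest index with $P_k \le 2^{s-1}$ (allowing $k=0$). If $k=x$, the whole demand sequence sums to at most $2^{s-1}$ and embeds into the first copy $\sigma_{s-1}$ by the inductive hypothesis. Otherwise $P_k \le 2^{s-1} < P_{k+1}$, and I would embed the prefix $\langle d_1, \ldots, d_k\rangle$ into the first copy (valid since $P_k \le 2^{s-1}$), match the single demand $d_{k+1}$ against the central term $2^s$ (valid since $d_{k+1}\le P_{k+1}\le 2^s$), and embed the suffix $\langle d_{k+2}, \ldots, d_x\rangle$ into the second copy. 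The one inequality that must be checked is that the suffix fits, i.e.\ $\sum_{i=k+2}^{x} d_i \le 2^{s-1}$; this is exactly where the choice of $k$ pays off, since that sum equals the total minus $P_{k+1}$, and the total is at most $2^s$ while $P_{k+1}>2^{s-1}$, so $\sum_{i=k+2}^{x} d_i < 2^s - 2^{s-1} = 2^{s-1}$. Concatenating the three index ranges, which are automatically increasing, yields the required subsequence.

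For the third property I would observe that $\sigma_s$ has length $2^{s+1}-1 = O(2^s)$ and can be emitted by the recursion defining it: outputting $\sigma_s$ costs $T(s) = 2T(s-1) + O(1) = O(2^s)$ time, and the only state to maintain is the recursion stack, whose depth is $s$ and which needs only a constant amount of information per level together with the current level index, for $O(s)$ space overall. Alternatively, one can note the closed form $c_j = 2^{\nu(j)}$, where $\nu(j)$ is the $2$-adic valuation of $j$ (this is verified by the same induction used above), which lets one emit the $j$-th term directly while scanning a single $O(s)$-bit counter. Taken together these arguments give the lemma, with the split-point induction for universality being the key step.
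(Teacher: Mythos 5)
Your proof is correct, but note that the paper itself offers no proof to compare against: Lemma \ref{lem:us} is imported by citation from Asano et al.\ \cite{Asano}, and the authors use it as a black box. What you have written is a complete, self-contained proof of the cited result. The split-point induction for universality is exactly the right argument: choosing $k$ maximal with $P_k \le 2^{s-1}$ makes the prefix fit the first copy of $\sigma_{s-1}$ by induction, the middle demand $d_{k+1}$ is absorbed by the central term $2^s$, and the maximality of $k$ (i.e.\ $P_{k+1} > 2^{s-1}$) forces the suffix sum below $2^{s-1}$ so it fits the second copy; the three index ranges concatenate into an increasing subsequence. The counting induction and the two computability arguments (recursion stack of depth $s$, or the closed form $c_j = 2^{\nu(j)}$ with $\nu$ the $2$-adic valuation) are likewise correct, and the closed form in particular gives the random-access computation that the paper's Algorithm \ref{algo:bstring} implicitly relies on when it indexes into $\sigma_{\log d}$.
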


	\begin{definition} Let $T$ be a balanced binary tree. Let $t$ be a node in $T$ and $d$ be a positive power of $2$.
		We define a sequence of leaves of $T$ in the following way.
		
		\[ \lseq{t}{d} = \lseq{\leftchild{t}}{c_1} \diamond \lseq{\rightchild{t}}{c_1} \diamond \lseq{\leftchild{t}}{c_2} \diamond \lseq{\rightchild{t}}{c_2} \diamond \cdots \diamond \lseq{\rightchild{t}}{c_{2d-1}} \]
		
		where $c_i$ is the $i$-th integer in $\sigma_{\log d}$. For the base case, if $t$ is a leaf, we have $\lseq{t}{d}$ is $\langle t \rangle$ concatenated with itself $d$ times. We also define $\lseq{t}{d}(r)$ to be the $r$th leaf in the sequence $\lseq{t}{d}$. The length of $\lseq{t}{d}$ is the number of leaves in $\lseq{t}{d}$.
	\end{definition}

	We now wish to show that we can construct the sequence $\lseq{\rt{T}}{d}$ in $O(h + \log d)$ space. We will first bound its length and then present a method to find its $r$th leaf.
	
	\begin{lemma}Let $T$ be a balanced binary tree. Let $t$ be a node in $T$, $d$ be a positive power of $2$ and $h$ be the height of subtree of $T$ rooted at $t$. Then, the length of sequence $\lseq{t}{d}$ is 
		
		\[2^{\Tdepth}d\binom{\Tdepth + \log d}{\log d}\] 
		
		\label{lem:bsize}
	\end{lemma}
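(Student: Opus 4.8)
The plan is to induct on the height $h$ of the subtree rooted at $t$, letting $d$ range over all powers of two simultaneously. Write $\ell(t,d)$ for the length of $\lseq{t}{d}$. For the base case $h=0$, the node $t$ is a leaf, so by definition $\lseq{t}{d}$ is $\langle t \rangle$ concatenated with itself $d$ times, giving $\ell(t,d)=d$; this matches the claimed formula since $\binom{\log d}{\log d}=1$.

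For the inductive step I would unfold the recursive definition. Since $T$ is balanced, both $\leftchild{t}$ and $\rightchild{t}$ are roots of subtrees of height $h-1$, so writing $\sigma_{\log d}=\langle c_1,\dots,c_{2d-1}\rangle$ the definition yields
$$\ell(t,d)=\sum_{i=1}^{2d-1}\bigl(\ell(\leftchild{t},c_i)+\ell(\rightchild{t},c_i)\bigr)=2\sum_{i=1}^{2d-1}\ell(\mathrm{child},c_i),$$
where $\mathrm{child}$ denotes either child (both contribute equally by the induction hypothesis). The crucial point is that every entry $c_i$ of $\sigma_{\log d}$ is itself a power of two, so each term falls squarely under the inductive hypothesis applied at height $h-1$.

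Next I would regroup the sum by the distinct values the $c_i$ take. By the second bullet of Lemma~\ref{lem:us}, the integer $2^j$ appears exactly $d/2^j$ times in $\sigma_{\log d}$ for $0\le j\le \log d$. Substituting the induction hypothesis $\ell(\mathrm{child},2^j)=2^{h-1}\,2^j\binom{(h-1)+j}{j}$ gives
$$\sum_{i=1}^{2d-1}\ell(\mathrm{child},c_i)=\sum_{j=0}^{\log d}\frac{d}{2^j}\,2^{h-1}\,2^j\binom{h-1+j}{j}=d\,2^{h-1}\sum_{j=0}^{\log d}\binom{h-1+j}{j}.$$

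The main (and essentially only) obstacle is the final binomial simplification. I would invoke the hockey-stick identity $\sum_{j=0}^{m}\binom{r+j}{j}=\binom{r+m+1}{m}$ with $r=h-1$ and $m=\log d$, which collapses the inner sum to $\binom{h+\log d}{\log d}$. Restoring the leading factor of $2$ then gives $\ell(t,d)=2^{h}\,d\binom{h+\log d}{\log d}$, closing the induction. The only subtlety to flag is the balancedness assumption ensuring both children have height exactly $h-1$; this is precisely what makes the length depend only on $h$ and $d$, and it is guaranteed here since the tree decomposition is balanced and $n$ is assumed to be a power of two.
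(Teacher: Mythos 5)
Your proof is correct and takes essentially the same route as the paper: induction on the height, unfolding the recursive definition of $\lseq{t}{d}$, regrouping the sum over $\sigma_{\log d}$ by distinct powers of two using the occurrence counts from Lemma \ref{lem:us}, and collapsing the resulting binomial sum to $\binom{h+\log d}{\log d}$. The only cosmetic difference is that you cite the hockey-stick identity by name, whereas the paper performs the same collapse by telescoping via $\binom{a}{r} = \binom{a+1}{r} - \binom{a}{r-1}$; these are the same identity.
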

	
	\begin{proof}
		Let $\len{\Tdepth}{d}$ be the length of the sequence $\lseq{t}{d}$. By definition of $\lseq{t}{d}$, we have 
		\[\len{\Tdepth}{d} = \begin{cases} 2\sum_{c \in \sigma_{\log d}}\len{\Tdepth - 1}{c} & \Tdepth > 0 \\
		d & h = 0\end{cases}\]
		From lemma $\ref{lem:us}$, we get that $\sigma_{\log d}$ contains exactly $\frac{d}{2^i}$ occurrences of the integer $2^i$. Thus we have:
		\[\len{\Tdepth}{d} = \begin{cases} \sum_{i = 0}^{\log d}\frac{d}{2^{i-1}}\len{\Tdepth - 1}{c} & \Tdepth > 0\\
		d & \Tdepth = 0 \end{cases}\]
		
		We claim that $\len{\Tdepth}{d} = 2^\Tdepth d\binom{\Tdepth + \log d}{\log d}$ and we prove this through induction on $\Tdepth$. For $\Tdepth = 0$, we see that
		
		\begin{align*}
		2^{\Tdepth}d\binom{\Tdepth + \log d}{\log d} &= d\binom{\log d}{\log d}\\
		&= d\end{align*}
		Now, we assume the statement to be true for smaller values of ${\Tdepth}$.
		We see that:
		
		\begin{align*}
		\len{\Tdepth}{d} &= \sum_{i = 0}^{\log d}\frac{d}{2^{i-1}}\len{\Tdepth - 1}{2^i} \\
		\len{\Tdepth}{d} &= \sum_{i = 0}^{\log d}\frac{d}{2^{i-1}}2^{\Tdepth - 1}2^i\binom{\Tdepth + i - 1}{i}\\
		\len{\Tdepth}{d} &= 2^{\Tdepth}d\sum_{i = 0}^{\log d}\binom{\Tdepth + i - 1}{i}
		\end{align*}
		using $\binom{a}{r} = \binom{a+1}{r} - \binom{a}{r-1}$
		
		\begin{align*}
		\len{\Tdepth}{d} &= 2^{\Tdepth}d\sum_{i = 0}^{\log d}(\binom{\Tdepth + i}{i} - \binom{\Tdepth + i - 1}{i-1}) \\
		\len{\Tdepth}{d} &= 2^{\Tdepth}d\binom{\Tdepth + \log d}{\log d}
		\end{align*}

	\end{proof}
	
	\begin{algorithm}[h]
		
		\SetAlgoNoLine
		\DontPrintSemicolon
		\SetKwFor{For}{for}{do}{endfor}
		\SetKwFor{ForEach}{for each}{do}{endfor}
		\SetKwIF{If}{ElseIf}{Else}{if}{then}{else if}{else}{endif}
		\KwIn{$\langle t, d, r\rangle$}
		\While{$t$ is not a leaf}{
			Let $i^*$ be the smallest integer such that $(r - 2\sum_{i = 1}^{i^*}\len{m/2}{c_i}) \leq 0$ where $c_i$ is the $i$'th integer in the sequence $\sigma_{\log d}$\;
			\eIf{$r - 2\sum_{i = 1}^{i^* - 1}\len{m/2}{c_i} - \len{m/2}{c_{i^*}} \leq 0$}{
				$r \gets r - 2\sum_{i = 1}^{i^* - 1}\len{m/2}{c_i}$\;
				$t \gets \leftchild{t}$\;
				$d \gets c_{i^*}$
			}{
				$r = r - 2\sum_{i = 1}^{i^* - 1}\len{m/2}{c_i} - \len{m/2}{c_{i^*}}$\;
				$t \gets \rightchild{t}$\;
				$d \gets c_{i^*}$
			}
			
		}
		return $t$
		\caption{Computes the $r$-th element of the sequence $\lseq{t}{d}$}
		\label{algo:bstring}
	\end{algorithm}
	
	\begin{lemma}
		Let $T$ be a binary balanced tree of depth at most $\Tdepth$. Let $t$ be a node of $T$ and $d$ be a power of $2$. The sequence $\lseq{t}{d}$ can be constructed in space $O(\Tdepth + \log d)$.
	\end{lemma}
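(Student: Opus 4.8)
The plan is to avoid ever storing $\lseq{t}{d}$ explicitly, since by Lemma~\ref{lem:bsize} its length can be as large as $2^{\Tdepth}d\binom{\Tdepth+\log d}{\log d}$, which is far too long to write down in the allotted space. Instead I would show that Algorithm~\ref{algo:bstring} outputs the single leaf $\lseq{t}{d}(r)$ within $O(\Tdepth+\log d)$ space, and then emit the whole sequence on the fly by looping $r$ from $1$ to $\len{\Tdepth}{d}$ and invoking the algorithm once for each $r$. The outer loop reuses the same workspace and adds only a counter $r$ of $O(\Tdepth+\log d)$ bits (the bound $\len{\Tdepth}{d}$ is itself computable in this space from the closed form of Lemma~\ref{lem:bsize}), so constructing the whole sequence costs the same space as a single $r$-th-element query.

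For correctness of Algorithm~\ref{algo:bstring} I would induct on the height $m$ of the subtree rooted at $t$. The defining recurrence writes $\lseq{t}{d}$ as the concatenation, over $i=1,\ldots,2d-1$, of the two blocks $\lseq{\leftchild{t}}{c_i}$ and $\lseq{\rightchild{t}}{c_i}$, where $c_i$ is the $i$-th entry of $\sigma_{\log d}$ and each block is a sequence of a child subtree of height $m-1$, hence of length $\len{m-1}{c_i}$. The algorithm locates the block containing position $r$ by scanning $i$ and accumulating the pair lengths $2\len{m-1}{c_i}$ until the running total first reaches $r$; this yields the index $i^*$, and one further comparison decides whether $r$ falls in the left block $\lseq{\leftchild{t}}{c_{i^*}}$ or the right block $\lseq{\rightchild{t}}{c_{i^*}}$. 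It then subtracts the lengths of all preceding blocks from $r$, so that the updated $r$ is exactly the local index of the target leaf inside the chosen child's sequence, and descends by resetting $t$ to that child and $d$ to $c_{i^*}$. The invariant maintained is that the sought leaf is always the current $r$-th leaf of the current $\lseq{t}{d}$; the base case is a leaf $t$, where $\lseq{t}{d}=\langle t\rangle$ repeated $d$ times and the output is $t$ regardless of $r$.

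For the space bound, note first that the while loop descends one level per iteration, so it runs at most $\Tdepth$ times, and all of $t$, $d$, $r$, $i^*$ and the running partial sum are overwritten in place, so space does not accumulate over the descents. Every block length $\len{m-1}{c_i}$ is at most $\len{\Tdepth}{d}\le 2^{\Tdepth}d\binom{\Tdepth+\log d}{\log d}=2^{O(\Tdepth+\log d)}$ and so fits in $O(\Tdepth+\log d)$ bits; the same bound holds for $r$ and for the partial sum, a treenode reference needs $O(\Tdepth)$ bits, and each $c_i$, being the $i$-th term of $\sigma_{\log d}$, is computable in $O(\log d)$ space by Lemma~\ref{lem:us}. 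Hence the total working space is $O(\Tdepth+\log d)$, and since all parameters are polynomially bounded in the reachability application the running time is polynomial.

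The step I expect to need the most care is the in-space evaluation of $\len{m-1}{c_i}$ through its closed form, which contains the binomial coefficient $\binom{\Tdepth+\log d}{\log d}$. The point to verify is that this value, having only $O(\Tdepth+\log d)$ bits, can be produced in $O(\Tdepth+\log d)$ space and polynomial time, for instance by forming a product of $\log d$ factors of the form $(a-b+j)/j$ whose integer partial products are themselves binomial coefficients and hence never exceed the final answer. The only remaining subtlety is bookkeeping: one must ensure the $\Tdepth$ descents genuinely reuse one fixed block of workspace rather than recursing, since a recursive formulation would stack $\Tdepth$ copies of the local data; this is precisely why Algorithm~\ref{algo:bstring} is written iteratively.
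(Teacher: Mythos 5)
Your proposal is correct and follows essentially the same route as the paper: compute the single leaf $\lseq{t}{d}(r)$ by the iterative block-locating descent of Algorithm~\ref{algo:bstring}, bound every index and partial sum by $O(\Tdepth+\log d)$ bits using the length bound of Lemma~\ref{lem:bsize}, and emit the whole sequence by looping $r$ from $1$ to $\len{\Tdepth}{d}$ while reusing the same workspace. You additionally spell out details the paper leaves implicit (the correctness induction for the descent and the in-space evaluation of $\binom{\Tdepth+\log d}{\log d}$ via integer partial products), but the underlying argument is the same.
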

	
	\begin{proof}
		We see that $L(m, d)$ is atmost a polynomial in $m$ and $d$. For a given integer $r$, let $i^*$ be the smallest integer such that $r - 2\sum_{i = 1}^{i^*}\len{m/2}{c_i} \leq 0$. By the definition, $\lseq{t}{d}(r) = \lseq{\leftchild{t}}{c_{i^*}}(r - 2\sum_{i = 1}^{i^* - 1}\len{m/2}{c_i})$  if $r - 2\sum_{i = 1}^{i^* - 1}\len{m/2}{c_i} - \len{m/2}{c_{i^*}} \leq 0$ and $\lseq{t}{d}(r) = \lseq{\rightchild{t}}{c_{i^*}}(r - 2\sum_{i = 1}^{i^* - 1}\len{m/2}{c_i} - \len{m/2}{c_{i^*}})$ otherwise. 
		
		The length of the sequence $\lseq{t}{d}$ is at most $2^hd\binom{h + \log d}{\log d}$. Hence the number of bits required to store any index of the sequence is at most $\log (2^hd\binom{h + \log d}{\log d}) = O(h + \log d)$. This gives the space bound of Algorithm \ref{algo:bstring}.
	\end{proof}
	
	\subsection{Algorithm to Solve Reachability}
	\label{subsec:solreach}
	
	\begin{algorithm}[h]
		\caption{Reach($G$, $T$, $u$, $v$)}
		\label{algo:reach}
		\SetAlgoNoLine
		\DontPrintSemicolon
		\SetKwFor{For}{for}{do}{endfor}
		\SetKwFor{ForEach}{for each}{do}{endfor}
		\SetKwIF{If}{ElseIf}{Else}{if}{then}{else if}{else}{endif}
		\KwIn{$\langle G, T, v , u\rangle$}
		Let $R_0$ be and $R_1$ be two $w\Tdepth$ bit-vectors\;
		Let $t_0$ and $t_1$ be two leaves of $T$ initialized arbitrarily\;
		Initialize all the bits of $R_0$ with $0$ and mark $u$ (by setting the bit at position $\pos{t_0}{u}$ to 1)\;
		\For{every leaf $f$ in $\lseq{\rt{T}}{n}$ in order}{ \label{alg:for}
			Let the iteration number be $i$\;
			Reset all the bits of $R_{i \bmod 2}$ to $0$\;
			Let $t_{i \bmod 2} \gets f$\;
			\For{all $x$ marked in $R_{(i-1) \bmod 2}$ and all $y$ in $V_f$}{
				\If{(x, y) is an edge in $G$ OR $x = y$}{
					Mark $y$ in $R_{i \bmod 2}$ (by setting the bit at position $\pos{t_{i \bmod 2}}{y}$ to 1)\;
				}
			}
		} \label{alg:endfor}
		If $v$ is marked return $1$; otherwise return $0$.
	\end{algorithm}
	
	For a leaf $t$ of $T$ and a vertex $v$ of $G$ we use $\pos{t}{v}$ for the position of $v$ in an arbitrarily fixed ordering of the vertices of $G_t$.
	
	\begin{lemma} Let $G$ be a graph and $T$ be a binary tree decomposition of $G$ of width $w$ and depth $\Tdepth$. Let $t$ be a node of $T$ and $d$ be a power of $2$. For each vertex  $y \in \vanc{t}$, $y$ is marked after the execution of iterations in lines $ \ref{alg:for}$ to $\ref{alg:endfor}$ of Algorithm $\ref{algo:reach}$ with values of $f$ in $\lseq{t}{d}$ if and only if there is a marked vertex $x$ in $\vanc{t}$ and a path from $x$ to $y$ in $\gad{t}$ of length at most $d$.
		\label{lem:iter}
	\end{lemma}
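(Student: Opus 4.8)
The plan is to induct on the height of the subtree of $T$ rooted at $t$, proving the stated equivalence simultaneously for all nodes at a given height. The engine of the proof is the structural fact that $\vanc{t}$ separates $\gad{t}$ into its two child-subtrees: since $\leftchild{t}$ and $\rightchild{t}$ both have $t$ on the tree path between them, property (iii) of tree decomposition gives $B(\leftchild{t}) \cap B(\rightchild{t}) \subseteq B(t) \subseteq \vanc{t}$, and more generally $V(\gad{\leftchild{t}}) \cap V(\gad{\rightchild{t}}) = \vanc{t}$ with no edge joining a vertex private to the left subtree to one private to the right subtree. Hence $\gad{t} = \gad{\leftchild{t}} \cup \gad{\rightchild{t}}$, and every path of $\gad{t}$ with both endpoints in $\vanc{t}$ breaks into maximal segments each lying entirely inside $\gad{\leftchild{t}}$ or inside $\gad{\rightchild{t}}$, with all segment boundaries in $\vanc{t}$. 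The same fact shows that at the transition between a left block and a right block of $\lseq{t}{d}$ exactly the marks on $\vanc{t}$ survive, so the state at block boundaries is always a subset of $\vanc{t}$; and since $\vanc{t} \subseteq \vad{f}$ for every leaf $f$ in the subtree of $t$, these marks are representable throughout, including at the final leaf. For the base case ($t$ a leaf) we have $\gad{t} = \ganc{t}$ and $\vanc{t} = \vad{t}$, the sequence $\lseq{t}{d}$ is the single leaf repeated $d$ times, each iteration extends the marked set by one edge, and induction on the number of iterations yields the equivalence with the exact bound $d$.

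The inductive ``if'' direction is, I expect, the heart of the argument. Given an initially marked $x \in \vanc{t}$ and a path $P$ to $y$ in $\gad{t}$ of length at most $d$, I decompose $P$ by the separator into alternating left/right segments $P_1, \ldots, P_m$ of lengths $d_1, \ldots, d_m$ with $\sum_j d_j \le d = 2^{\log d}$; after prepending a zero-length left segment if necessary to normalize the starting side and parity, I pair them as $(P_1,P_2),(P_3,P_4),\ldots$ and give the $k$-th pair the demand $e_k = \max(d_{2k-1}, d_{2k})$, so that $\sum_k e_k \le \sum_j d_j \le 2^{\log d}$. The first property of Lemma~\ref{lem:us} then furnishes strictly increasing indices $j_1 < j_2 < \cdots$ into $\sigma_{\log d}$ with $e_k \le c_{j_k}$. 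Assigning the $k$-th pair to the blocks $\lseq{\leftchild{t}}{c_{j_k}}$ and $\lseq{\rightchild{t}}{c_{j_k}}$, the strict increase of the $j_k$ places these in the correct order inside $\lseq{t}{d}$, and the built-in alternation $\lseq{\leftchild{t}}{c_i}\diamond\lseq{\rightchild{t}}{c_i}$ of the definition matches the left-then-right shape of each pair. Chaining the inductive hypothesis across these blocks carries the mark from each segment endpoint to the next, while mark persistence (every $z \in \vanc{t}$ has a length-$0$ path to itself, so no intervening block erases its mark) keeps marks alive between the chosen blocks; thus $y$ becomes marked.

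For the ``only if'' direction I would trace a surviving mark on $y$ backwards: the last block to (re)mark $y$ is some $\lseq{\leftchild{t}}{c}$ or $\lseq{\rightchild{t}}{c}$, and the inductive hypothesis exhibits a genuine path inside the corresponding child graph from a vertex of $\vanc{t}$ that was marked before that block; recursing back to an initially marked vertex and concatenating the pieces yields a walk, hence a path, from an initially marked $x \in \vanc{t}$ to $y$ in $\gad{t}$, which already gives reachability. The genuinely delicate point—the step I expect to be the main obstacle—is the \emph{quantitative} length bound here: a naive concatenation only bounds the reconstructed path by the sum of all budgets used, which exceeds $d$, so one must exploit the recursive, alternating structure of $\sigma_{\log d}$ together with the separator decomposition (and the multiplicity count in the second property of Lemma~\ref{lem:us}) to control the accumulated length. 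This bookkeeping is where the balanced alternating shape of $\lseq{t}{d}$ is essential, and where I would concentrate the care of the proof.
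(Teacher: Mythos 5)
Your forward (``if'') direction is essentially the paper's own argument: decompose the hypothesized path at the separator $\vanc{t}$ into maximal alternating left/right segments whose junction vertices lie in $\vanc{t}$ (by the third property of tree decomposition), feed the segment lengths to the first property of Lemma~\ref{lem:us}, map the resulting terms of $\sigma_{\log d}$ to blocks of $\lseq{t}{d}$, and chain the induction hypothesis across blocks using persistence of marks on $\vanc{t}$. The only difference is bookkeeping: the paper gives \emph{each} segment $p_j$ its own term $c_{i_j}$ (with demand sequence $\langle l_1,\ldots,l_k\rangle$, the segment using whichever of the two blocks $\lseq{\leftchild{t}}{c_{i_j}}\diamond\lseq{\rightchild{t}}{c_{i_j}}$ matches its side, marks persisting through the unused block), whereas you merge each consecutive left/right pair into a single demand $\max(d_{2k-1},d_{2k})$. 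Both demand sequences sum to at most $d$, so both are legitimate invocations of Lemma~\ref{lem:us}; this part of your proposal is correct and equivalent to the paper's.

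The step you flagged as the main obstacle --- the quantitative ``only if'' direction --- is not a matter of delicate bookkeeping: it is false, and no bookkeeping can rescue it. Take $G$ with vertices $u,a,w_1,b,w_2$ and edges $u\to a\to w_1\to b\to w_2$; let $T$ have root $t$ with $B(t)=\{u,w_1,w_2\}$, left leaf $L$ with $B(L)=\{u,a,w_1\}$, right leaf $R$ with $B(R)=\{w_1,b,w_2\}$; let $d=2$ and let $u$ be the only vertex marked initially. Since $\sigma_1=\langle 1,2,1\rangle$, the sequence $\lseq{t}{2}$ contains the contiguous blocks $\lseq{L}{2}\diamond\lseq{R}{2}$. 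The mark on $u\in B(t)$ persists up to those blocks; by the forward direction applied to $L$, the block $\lseq{L}{2}$ marks $w_1$ (path $u\to a\to w_1$ of length $2$ in $\gad{L}$); the mark on $w_1\in B(t)$ persists; and the block $\lseq{R}{2}$ then marks $w_2$ (path $w_1\to b\to w_2$ of length $2$ in $\gad{R}$). So $w_2\in\vanc{t}$ ends up marked although its distance from $u$ in $\gad{t}$ is $4>d$. The root cause is exactly what you observed: the budgets of the blocks of $\lseq{t}{d}$ sum to $\Theta(d\log d)$, and marks on $\vanc{t}$ survive between blocks, so traversed lengths accumulate well beyond $d$. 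For what it is worth, the paper's own proof never addresses this direction at all --- it starts from the path $p$ and shows $y$ gets marked, i.e., it proves only the implication you completed. Nor is the converse needed anywhere: in Lemma~\ref{lem:algo}, soundness of Algorithm~\ref{algo:reach} follows from the one-line invariant that every marked vertex is reachable from $u$ in $G$ (marks propagate only along edges of $G$ or via the rule $x=y$), with no length bound and no reference to $\gad{t}$. So the correct move is not to spend effort on the ``only if'' but to weaken the lemma's statement to the forward implication and invoke that trivial invariant for soundness.
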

	
	\begin{proof} We prove this by induction on the height of subtree rooted at $t$. The base case is trivial. Let $p$ be the path of length at most $d$ from $x$ to $y$ such that $x$ is marked and $x, y$ is in $\vanc{t}$. We see that the edges of path $p$ will belong to either $\ead{\leftchild{t}}$ or $\ead{\rightchild{t}}$ (or both). We label an edge of $p$ as $0$ if it belongs to $\ead{\leftchild{t}}$, else label it as $1$. Break down $p$ into subpaths $p_1, \ldots, p_k$ such that the edges in $p_i$ all have same label and label of edges in $p_{i+1}$ is different form label of $p_i$. The endpoints $y_i$ of these subpaths will belong to $\vanc{t}$, for otherwise $y_i$ will not be in $B(t)$ but since $y_i$ has edges of both labels incident on it, it will be in bags of both subtrees rooted at $\leftchild{t}$ and $\rightchild{t}$ contradicting the third property of tree decomposition. Let $l_i$ be the length of path $p_i$. Since $l_1 + l_2 + \cdots + l_k \leq d$, by Lemma $\ref{lem:us}$, there exists a subsequence $\langle c_{i_1}, c_{i_2}, \ldots, c_{i_k} \rangle$ of $\sigma_{\log d}$ such that $l_j \leq c_{i_j}$.
		
		Consider the subsequence $\lseq{\leftchild{t}}{c_{i_1}} \diamond \lseq{\rightchild{t}}{c_{i_1}} \diamond \lseq{\leftchild{t}}{c_{i_2}} \diamond \lseq{\rightchild{t}}{c_{i_2}} \diamond \lseq{\leftchild{t}}{c_{i_3}} \diamond \lseq{\rightchild{t}}{c_{i_3}} \diamond \cdots \diamond \lseq{\leftchild{t}}{c_{i_k}} \diamond \lseq{\rightchild{t}}{c_{i_k}}$ of $\lseq{t}{d}$. We claim that $y_j$ is marked after the iterations with the value of $f$ in $\lseq{\leftchild{t}}{c_{i_j}} \diamond \lseq{\leftchild{t}}{c_{i_j}}$. Since $y_{j-1}$ is marked before the iterations and the path $p_{j}$ is either the subgraph $G_{\leftchild{t}}$ or $G_{\rightchild{t}}$ having length at most $c_{i_j}$, $y_j$ will be marked by induction hypothesis. 
	\end{proof}
	
	\begin{lemma}
		\label{lem:algo}
		On input of a graph $G$ with $n$ vertices and its tree decomposition $T$ with treewidth $w$ and depth $\Tdepth$; Algorithm {\ref{algo:reach}} solves reachability in $G$ and requires $O(w\Tdepth + \log n)$ space and time polynomial in $2^{\Tdepth}, n$ and $w$.
	\end{lemma}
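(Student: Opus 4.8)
The plan is to prove three things about Algorithm~\ref{algo:reach}: correctness, the $O(w\Tdepth + \log n)$ space bound, and the $\poly(2^{\Tdepth}, n, w)$ time bound. Correctness will drop out of Lemma~\ref{lem:iter} almost immediately, so the real work lies in the two resource bounds, and within those the crux is controlling the length of the driving sequence $\lseq{\rt{T}}{n}$. For correctness I would instantiate Lemma~\ref{lem:iter} at the root with $d = n$. First I note that $\vanc{\rt{T}} = B(\rt{T})$, since the root is its own only ancestor, and that $\gad{\rt{T}} = G$: every vertex lies in some bag and every edge has both endpoints in some bag by the tree-decomposition axioms, so $\vad{\rt{T}} = V(G)$ and $\ead{\rt{T}} = E(G)$. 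By the standing assumption $u, v \in B(\rt{T}) = \vanc{\rt{T}}$. The outer loop runs exactly over $\lseq{\rt{T}}{n}$ and the initialization marks only $u$, so Lemma~\ref{lem:iter} gives that $v$ is marked at the end iff there is a $u$--$v$ path in $G$ of length at most $n$. Since any such path can be shortened to a simple one of length at most $n-1 \le n$, this is precisely the condition that $v$ is reachable from $u$.

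For the space bound, the only objects of size $\Omega(w\Tdepth)$ are the two bit-vectors $R_0, R_1$: each indexes $\vanc{f}$ for the current leaf $f$, and since a leaf has at most $\Tdepth + 1$ ancestors, each contributing a bag of size at most $w+1$, we get $|\vanc{f}| = O(w\Tdepth)$. The loop variable is an index $r$ into $\lseq{\rt{T}}{n}$, whose length $\len{\Tdepth}{n} = 2^{\Tdepth} n \binom{\Tdepth + \log n}{\log n}$ (Lemma~\ref{lem:bsize}) has logarithm $O(\Tdepth + \log n)$, so storing $r$ costs $O(\Tdepth + \log n)$ bits; producing the $r$-th leaf via Algorithm~\ref{algo:bstring} costs $O(\Tdepth + \log n)$ space by the preceding lemma. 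The one point demanding care is the position map $\pos{f}{\cdot}$: rather than tabulating the vertices of $\vanc{f}$, which would cost $O(w\Tdepth\log n)$ bits and break the budget, I would recompute each position on the fly, streaming over the ancestors of $f$ and their bags while holding only a counter and the current treenode, in $O(\log n)$ space. Summing these contributions gives $O(w\Tdepth + \log n)$.

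For the time bound, the dominant factor is the number of outer iterations, $\len{\Tdepth}{n}$. Here I would use the estimate $\binom{\Tdepth + \log n}{\log n} \le 2^{\Tdepth + \log n} = 2^{\Tdepth} n$, so that $\len{\Tdepth}{n} \le 2^{2\Tdepth} n^2$, which is polynomial in $2^{\Tdepth}$ and $n$. Each iteration computes its leaf via Algorithm~\ref{algo:bstring} in time polynomial in $\Tdepth$ and $\log n$, resets a bit-vector in $O(w\Tdepth)$ time, and runs the inner double loop over the $O(w\Tdepth)$ marked positions and the $O(w\Tdepth)$ candidates in $\vad{f}$, each step requiring an adjacency test computable in $\poly(n)$ time. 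Thus every iteration costs $\poly(n, w, \Tdepth)$, and the total is $\poly(2^{\Tdepth}, n, w)$.

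The main obstacle is the single estimate on $\len{\Tdepth}{n}$: the bound $\binom{\Tdepth + \log n}{\log n} \le 2^{\Tdepth} n$ simultaneously certifies the polynomial running time and keeps the sequence index within $O(\Tdepth + \log n)$ space. The secondary subtlety, easy to overlook, is that the position encoding must be streamed rather than stored, since tabulating $\vanc{f}$ would on its own exceed the $O(w\Tdepth + \log n)$ budget; once that is arranged, the remainder is routine bookkeeping layered on top of Lemma~\ref{lem:iter}.
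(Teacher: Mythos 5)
Your proof is correct and follows essentially the same route as the paper's: correctness by instantiating Lemma~\ref{lem:iter} at the root with $d=n$, space accounting for the two bit-vectors, the sequence index, and the on-the-fly computation of $\pos{f}{\cdot}$, and the time bound from the sequence length of Lemma~\ref{lem:bsize}. You in fact fill in details the paper leaves implicit, notably the explicit estimate $\binom{\Tdepth+\log n}{\log n}\le 2^{\Tdepth}n$ certifying that the sequence length is $\poly(2^{\Tdepth},n)$, and the observation that the position map must be recomputed by streaming rather than stored.
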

	\begin{proof} The proof of correctness of the algorithm follows from Lemma $\ref{lem:iter}$ and the fact that $u$ and $v$ are both present in $B(\rt{T})$ and $u$ is marked before the first iteration of the for-loop in line $\ref{alg:for}$.
		
		We first analyse the space required. The size of bit-vectors $R_0$ and $R_1$ is $w\Tdepth$. $t_0$ and $t_1$ are indices of nodes of $T$. The space required to store index of a vertex of $T$ is $O(\Tdepth)$. Space required to store a vertex of $G$ is $O(\log n)$, and $\pos{t}{x}$ for a node $t$ and a vertex $x$ can be found in $O(\log n + \Tdepth)$ space. Hence the total space required is $O(w\Tdepth + \log n)$.
		
		We now analyze the time bound. By Lemma $\ref{lem:bsize}$, the size of $\lseq{t}{d}$ is polynomial in $2^h$ and $d$, the number of iterations in the for-loop of line $\ref{alg:for}$ is thus a polynomial. The other lines do trivial stuff and hence the total running time of the algorithm is polynomial.
	\end{proof}
	
	Now, Theorem $\ref{thm:reach}$ follows from Lemma $\ref{lem:algo}$. Combining Theorem $\ref{thm:reach}$ and Theorem $\ref{thm:decomp}$ we get the proof of Theorem $\ref{thm:main}$.

	\bibliographystyle{plain}
	\bibliography{references}
	
\end{document}